\documentclass[journal,romanappendices]{IEEEtran}

\usepackage{float}
\usepackage{amsmath,amssymb,amsthm}
\newtheorem{theorem}{Theorem}
\usepackage[numbers,sort&compress]{natbib}

\usepackage{hyperref}
\usepackage{graphicx}
\usepackage{booktabs}
\usepackage{flushend}

\usepackage[T1]{fontenc}
\hypersetup{
  colorlinks=true,
  linkcolor=blue,
  anchorcolor=blue,
  citecolor=blue
}
\usepackage{algorithmic}
\usepackage{algorithm}
\usepackage{tikz}
\usetikzlibrary{shapes.geometric,arrows,positioning,fit,backgrounds,calc}

\begin{document}

\title{Matching Particle-Arrival Rates for Low-Count Image Denoising}
\author{Huazhen~Liu,~\IEEEmembership{Student Member,~IEEE,}
        Zipei Yan,~\IEEEmembership{Student Member,~IEEE,}
        and~Jizhou~Li,~\IEEEmembership{Senior Member,~IEEE}%
\thanks{This work is supported by the National Natural Science Foundation of China (No. T2422017), and the Hong Kong Research Grants Council (No. 21204124). 

H. Liu, Z. Yan and J. Li are with the Department
of Electronic Engineering, The Chinese University of Hong Kong, Shatin, Hong Kong, e-mail: jzli@ee.cuhk.edu.hk.  }%
}

\markboth{Submitted to IEEE Transactions on Image Processing, August~2026}%
{}

\maketitle

\begin{abstract}

Low-count imaging arises when limited illumination or dose yields only a few detected particles in microscopy,
such as photons or electrons, causing measurements to be dominated by signal-dependent Poisson shot noise. 
Supervised denoisers typically require paired clean and noisy observations, 
whereas many self-supervised approaches exploit spatial or temporal redundancy without directly estimating the underlying particle-arrival process. 
We introduce PARM, a self-supervised denoising framework based on particle-arrival-rate matching. 
By modeling particle accumulation over exposure as a Poisson counting process, PARM identifies the conditional particle-arrival-rate field with the posterior mean of the latent clean rate. 
Poisson thinning then splits each observation into complementary lower-exposure views that provide self-supervised rate targets for one another. 
In implementation, the network estimates the normalized spatial allocation of the conditional rate, 
while the observed total count determines its global particle mass. 
Experiments on synthetic data, natural-video sequences corrupted with Poisson noise,
and real low-dose electron microscopy demonstrate leading or competitive performance while effectively preserving fine structures. 
PARM therefore provides a statistically grounded approach to low-count image denoising when clean references are unavailable.

\end{abstract}

\begin{IEEEkeywords}
Low-count imaging, Poisson image denoising, self-supervised denoising, computational microscopy, electron microscopy.
\end{IEEEkeywords}

\IEEEpeerreviewmaketitle

\vspace{-8pt}
\section{Introduction}

\IEEEPARstart{L}{ow}-count imaging arises in many practical domains, such as nighttime surveillance~\cite{shangguan2023day}, autonomous driving~\cite{rapp2020advances}, and both optical~\cite{morris2015imaging} and electron microscopy~\cite{em,shine}.
Under limited illumination, short exposure time, or low radiation dose, sensors capture only a sparse number of discrete particles, such as photons or electrons.
Stochastic variations in these discrete arrivals introduce signal-dependent Poisson shot noise, degrading underlying structures.

Existing approaches fall into two broad categories: model-based and learning-based methods.
Model-based frameworks incorporate Poisson likelihood~\cite{le2007variational,allard2008total}, variance stabilization~\cite{anscombe1948transformation}, or unbiased risk estimation~\cite{purelet}, while supervised learning methods optimize deep neural networks using paired clean and noisy observations~\cite{chen2018learning,li2025ts}.
However, reliable clean references are seldom available for dynamic scenes and scientific imaging modalities.
Self-supervised methods bypass the requirement for clean targets by exploiting independently corrupted observations~\cite{lehtinen2018noise2noise} or internal redundancies~\cite{krull2019noise2void,batson2019noise2self,em,stbn}. They typically treat each measurement as a noisy image acquired at a fixed exposure, rather than as a sample from an exposure-dependent Poisson counting process.

We introduce \textbf{PARM} (\textbf{P}article \textbf{A}rrival-\textbf{R}ate \textbf{M}atching), a self-supervised framework built upon the particle-arrival-rate dynamics that govern Poisson count observations. 
Inspired by flow and generator matching~\cite{lipman2023flow,holderrieth2025generator}, we formulate particle accumulation as an exposure-indexed probability path. By applying the Kolmogorov forward equation~\cite{bartlett1978introduction} to the resulting count dynamics, we derive the conditional particle-arrival-rate field, which equals the posterior mean of the clean particle-arrival rate and is the theoretical target for PARM. In practice, our model learns the normalized spatial allocation of this field.

Since the latent clean particle-arrival rate is unobservable, PARM constructs self-supervised targets through Poisson thinning. Each raw observation is split into complementary lower-exposure views, whose normalized counts provide finite-exposure rate estimates. To further stabilize learning under extremely low-count regimes, we combine bidirectional rate matching with soft cross-view consistency regularization.

Our contributions are summarized as follows:
\vspace{-1pt}
\begin{enumerate}
\item We formulate low-count image denoising through an exposure-dependent Poisson counting process. Using the Kolmogorov forward equation, we derive the conditional particle-arrival-rate field and establish its equivalence to the posterior mean of the clean particle-arrival rate.

\item We establish finite-exposure rate targets from a single noisy observation through Poisson thinning. Under the generalized Kullback--Leibler (GKL) divergence, we prove that optimization utilizing these self-supervised targets shares the same Bayes-optimal predictor with supervised clean-rate regression.

\item We decompose the conditional particle-arrival-rate field into its total conditional rate and normalized spatial allocation. The network learns the allocation, while inference anchors the global particle mass to the observed total count. Bidirectional rate matching and soft cross-view consistency regularization further stabilize learning in ultra-low-count regimes.

\item Experiments on synthetic Poisson data, standard noisy videos, and real low-dose electron microscopy datasets demonstrate that PARM achieves leading or competitive performance, with quantitative gains and structural preservation under extreme noise.
\end{enumerate}

These results establish a physically grounded framework for self-supervised low-count denoising, particularly for low-dose microscopy where clean reference targets are unavailable. The remainder of this paper is organized as follows. Section~\ref{sec:relatedwork} reviews Poisson and low-count denoising and self-supervised approaches. Section~\ref{sec:methods} develops the particle-arrival-rate formulation and its reconstruction strategies. Section~\ref{sec:experiments} evaluates PARM on synthetic and real data, including ablations and runtime comparisons. Section~\ref{sec:conclusion} summarizes the main findings and outlines directions for future work.

\section{Related Work}
\label{sec:relatedwork}
\subsection{Poisson and Low-Count Image Denoising} \label{sec:poisson_image_denoising}
Low-count image observations are typically characterized by the conditional Poisson model \(Y\mid X\sim\operatorname{Poisson}(\Lambda X)\), where \(X\geq0\) is the latent intensity and \(\Lambda>0\) represents the effective exposure or overall count scale.

Model-based methods, such as variational frameworks~\cite{le2007variational}, combine the Poisson likelihood with priors like total variation~\cite{rudin1992nonlinear,allard2008total} to restore corrupted observations.
In particular, PURE-LET~\cite{luisier2010fast} optimizes a linear expansion of transform-domain thresholding functions via a Poisson unbiased risk estimate.
Alternatively, variance-stabilizing transformations, notably the Anscombe transform~\cite{anscombe1948transformation}, are employed to Gaussianize Poisson observations, thereby enabling the deployment of conventional Gaussian denoisers~\cite{blu2007sure,dabov2007image,makinen2020collaborative,maggioni2012nonlocal} for Poisson noise.
To improve the performance in ultra-low-count regimes, exact unbiased inverse transformations and low-count corrections have been developed to reduce reconstruction bias when only a sparse number of particles are detected~\cite{makitalo2010optimal,makitalo2013exact,shen2018improved,zhang2021poisson}.

We note that real imaging sensors are often described by Poisson--Gaussian models, in which the Poisson component captures signal-dependent shot noise and the Gaussian component approximates signal-independent electronic readout noise~\cite{foi2008practical,zhang2019poisson}. In this context, unbiased risk estimation provides a means of selecting restoration parameters directly from noisy measurements without clean ground truth. In particular, PURE-LET was subsequently extended from pure Poisson noise to mixed Poisson--Gaussian noise~\cite{purelet}, and related Monte-Carlo SURE schemes enable black-box parameter optimization under mixed noise~\cite{ramani2008monte,lemontagner2014unbiased}.

More recently, data-driven approaches using deep neural networks have been applied to low-count denoising, ranging from supervised models trained on paired short- and long-exposure observations~\cite{chen2018learning} to diffusion models that exploit learned generative priors~\cite{li2025ts}. However, many supervised frameworks require accurately aligned reference targets, which are often practically infeasible to acquire for dynamic scenes and scientific imaging. Moreover, existing methods typically treat distinct particle levels as isolated degradation states, rather than explicitly modeling how particle counts accumulate over exposure time.

\subsection{Self-Supervised Denoising Approaches}
Self-supervised denoising avoids the requirement for clean targets by exploring self-supervision from noisy observations only. Blind-spot methods, such as Noise2Self~\cite{batson2019noise2self}, Noise2Void~\cite{krull2019noise2void}, PN2V~\cite{krull2020probabilistic}, Noise2Same~\cite{xie2020noise2same}, Self2Self~\cite{quan2020self2self}, and Blind2Unblind~\cite{wang2022blind2unblind}, use masking strategies to enable single-image-based self-denoising.
Complementing these approaches, subsampling-based methods, such as Neighbor2Neighbor~\cite{huang2021neighbor2neighbor}, Noise2Fast~\cite{lequyer2022fast}, ZSN2N~\cite{mansour2023zero}, and Pixel2Pixel~\cite{ma2025pixel2pixel}, construct surrogate training pairs from neighboring pixels.

For dynamic and scientific measurements, representative self-supervised methods include UDVD~\cite{udvd}, UMVD~\cite{aiyetigbo2024umvd}, SHINE~\cite{shine}, and STBN~\cite{stbn}, which exploit temporal redundancy or interpolation to restore noisy video and microscopy data. In related modalities, SSD~\cite{chen2026ssd} uses spectral structure for low-light Raman hyperspectral reconstruction, whereas bit2bit~\cite{bit2bit} models binary particle detections via a Bernoulli observation process. While these methods exploit domain-specific spatial, temporal, or spectral structure, they do not explicitly characterize the exposure-dependent accumulation of Poisson counts.

Several approaches more explicitly address non-Gaussian noise. Self-supervised Poisson--Gaussian denoising extends blind-spot learning to mixed-noise settings~\cite{khademi2021self}; Poisson2Sparse~\cite{ta2022poisson2sparse} combines single-image self-supervision with an unrolled sparse-representation network; and GR2R~\cite{monroy2025generalized} generalizes recorruption-based learning to Poisson and other exponential-family distributions. These methods accommodate Poisson statistics, yet they rely on masking, sparse priors, or synthetic recorruption rather than estimating an exposure-dependent particle-arrival rate.
PARM estimates this rate from a single observation by Poisson thinning, which splits the measurement into complementary lower-exposure views that remain independent given the latent clean rate and therefore serve as self-supervised targets.

\subsection{Discrete Probability Paths and Count-Process Modeling}

Flow matching constructs generative models by learning a continuous velocity field along a prescribed probability path~\cite{lipman2023flow,albergo2025stochastic}. Beyond generative sampling, flow-matching models have also been adapted to image inverse problems.
For instance, Pourya et al.~\cite{pourya2026flower} use a pretrained flow model to address linear inverse problems through flow-consistent destination estimation, measurement-aware refinement, and time progression, interpreting this procedure as an approximate Bayesian posterior sampling.

While these methods operate on continuous image-valued trajectories, recent work has extended this idea to discrete and stochastic state spaces.
Specifically, Discrete Flow Matching~\cite{gat2024discrete} characterizes probability evolution over discrete states, whereas Generator Matching~\cite{holderrieth2025generator} formulates learning through Markov generators or transition rates.
These perspectives motivate transition-rate fields as the discrete counterparts to continuous velocity fields.

Parallel to these developments, count-valued generative models have also incorporated thinning mechanisms and Poisson processes.
For instance, Learning to Jump~\cite{chen2023learning} employs forward thinning and reverse thickening to model latent counts, while ItDPDM~\cite{bhattacharya2025itdpdm} develops a discrete Poisson diffusion framework that completely avoids continuous relaxation.
However, these approaches focus primarily on unconditional sample generation and generative density modeling, do not explicitly use the physical particle-arrival process for conditional image denoising.

While PARM draws conceptual motivation from these frameworks, it differs in objective and formulation.
Unlike related inverse solvers that rely on pretrained continuous generative priors, PARM directly models the physical accumulation of discrete particle counts with respect to exposure.
Its derived transition-rate field provides a principled statistical target for restoration, enabling the network to learn the normalized spatial allocation of particle-arrival rates directly from noisy observations.

\section{Methods}
\label{sec:methods}

\subsection{Overview: From Observed Counts to Particle-Arrival Rates}
\label{sec:overview}

PARM reframes low-count image denoising as conditional particle-arrival-rate estimation.
An observed count image is a stochastic realization of an underlying particle-arrival process.
Therefore, rather than directly regressing a clean image from noisy counts, PARM estimates the rate field that statistically explains the observed count state.

Let $X\in\mathbb{R}_{\geq0}^{n}$ denote the latent intensity and let $Y_\tau$ denote the particle counts acquired at effective exposure $\tau$.
The central statistical target of PARM is the conditional particle-arrival-rate field
\begin{equation}
v_\tau(y)=\mathbb{E}[\lambda_0X\mid Y_\tau=y],
\label{eq:marginal_birth_rate_revised}
\end{equation}
where $\lambda_0>0$ converts latent intensity into particle-arrival rate per unit effective exposure.
Thus, the observed counts constitute the stochastic measurement, whereas $v_\tau(y)$ represents the posterior mean rate underlying that measurement.

Since the clean rate $\lambda_0X$ is usually unavailable, PARM constructs self-supervised targets by complementary Poisson thinning.
A single count observation is split into two lower-exposure views that are conditionally independent given $X$.
Each view is used as the network input, and the complementary counts provide an exposure-normalized rate target for that input.
Under the calibrated Poisson model, matching these targets under the GKL divergence recovers the same population-optimal predictor as supervised clean-rate regression: \(v_{p\tau}\) at a fixed splitting ratio, and the exposure-marginalized predictor \(\mathbb{E}[\lambda_0X\mid Z=z]\) when the ratio is randomized and not supplied to the network.
For practical reconstruction, PARM learns the normalized spatial allocation of the conditional rate, while the observed total count restores its global particle mass, yielding an observed-mass-preserving reconstruction.

\subsection{Exposure-Indexed Poisson Pure-Birth Process}
\label{sec:poisson_probability_path}

Let \(\mathbb{N}_0=\{0,1,2,\ldots\}\). For an image with \(n\) pixels or voxels, let \(X\in\mathbb{R}_{\geq 0}^{n}\) be a nonnegative random vector representing the latent intensity and let \(Y_\tau\in\mathbb{N}_0^{n}\) be a random vector representing the particle-count state at effective exposure \(\tau\).

We first reparameterize the physical acquisition time by a calibrated effective exposure.
Conditioned on a latent clean image \(X=x\), particle arrivals at physical time \(s\) have instantaneous rate \(\lambda(s)x\), where \(\lambda(s)>0\) is a time-varying scalar acquisition factor.
With cumulative exposure \(\mathcal{E}(s)=\int_0^s\lambda(r)\,dr\) and a reference calibration constant \(\lambda_0>0\), the effective exposure \(\tau=\mathcal{E}(s)/\lambda_0\) yields
\begin{equation}
Y_\tau\mid X=x\sim\operatorname{Poisson}(\lambda_0\tau x).
\label{eq:poisson_count_model_revised}
\end{equation}
Here and throughout, the multivariate Poisson distribution is evaluated element-wise, assuming conditionally independent pixel counts given \(X\).
Fig.~\ref{fig:poisson_path} summarizes the exposure-indexed count distributions, conditional particle-arrival-rate fields, and Poisson-thinning construction underlying PARM.

\begin{figure}[t]
    \centering
    \includegraphics[trim={0 50 0 0}, clip, width=\linewidth]{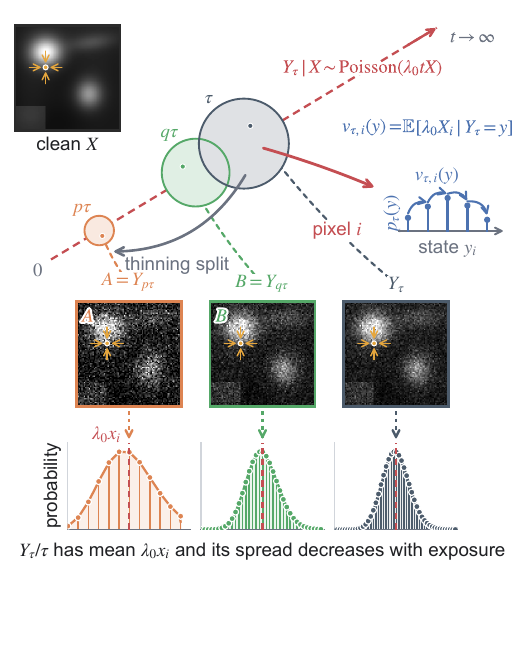}
    \caption{
    Illustration of the exposure-indexed Poisson rate formulation.
    Conditioned on the clean image \(X=x\), particle counts follow
    \(Y_\tau\mid X=x\sim\operatorname{Poisson}(\lambda_0\tau x)\).
    The circles indicate count distributions at effective exposures
    \(p\tau\), \(q\tau\), and \(\tau\).
    Poisson thinning decomposes \(Y_\tau\) into complementary views \(A\) and \(B\), where \(q=1-p\), \(A+B=Y_\tau\), and \(A\perp B\mid X\).
    Conditioned on \(X\), these views follow the count distributions associated with effective exposures \(p\tau\) and \(q\tau\), respectively.
    At pixel \(i\), the conditional particle-arrival-rate field component
    \(v_{\tau,i}(y)=\mathbb{E}[\lambda_0X_i\mid Y_\tau=y]\)
    governs transitions toward higher count states.
    Exposure-normalized counts have conditional mean \(\lambda_0x_i\), while their conditional variance decreases with exposure. 
    }
    \label{fig:poisson_path}
\end{figure}

For fixed \(x\), Eq.~\eqref{eq:poisson_count_model_revised} defines the exposure-indexed conditional probability path
\(p_\tau(y\mid x)=\operatorname{Poisson}(y;\lambda_0\tau x)\)
on \(\mathbb{N}_0^n\),
which originates from the degenerate zero-count distribution
\(
p_0(y\mid x)=\mathbf{1}\{y=0\}
\)
at \(\tau=0\).

As exposure increases, particle counts accumulate but are not removed. Conditioned on \(X=x\), \(\{Y_\tau\}_{\tau\geq0}\) therefore forms a pure-birth process on \(\mathbb{N}_0^n\). An arrival at pixel \(i\) drives the transition \(y\rightarrow y+e_i\) at rate \(\lambda_0 x_i\), where \(e_i \in \mathbb{R}^n\) denotes the \(i\)-th standard coordinate basis vector. No reverse transition is present, since the acquisition process only accumulates particle counts.

The conditional path satisfies the Kolmogorov forward equation
\begin{equation}
\partial_\tau p_\tau(y\mid x)
=
\sum_{i=1}^{n}
\lambda_0x_i\bigl(p_\tau(y-e_i\mid x)-p_\tau(y\mid x)\bigr),
\label{eq:conditional_kfe_revised}
\end{equation}
subject to \(p_\tau(y-e_i\mid x)=0\) whenever \(y_i=0\).

These conditional dynamics are governed by the unknown clean rate \(\lambda_0x\). In the following section, we marginalize over the latent-image distribution to obtain a state-dependent particle-arrival-rate field expressed directly in terms of the observed count state.

\subsection{Conditional Particle-Arrival-Rate Field}
\label{sec:conditional_rate}

The conditional dynamics in Sec.~\ref{sec:poisson_probability_path} depend on the unknown clean image \(X\). Marginalizing over the clean-image distribution gives \(p_\tau(y)=\int p_\tau(y\mid x)\,p_{\mathrm{data}}(x)\,dx\).
For every state \(y\) satisfying \(p_\tau(y)>0\), the conditional particle-arrival-rate field from Sec.~\ref{sec:overview},
\(v_\tau(y)=\mathbb{E}[\lambda_0X\mid Y_\tau=y]\),
is again the posterior mean of the latent clean rate given the observed count state, rather than a MAP estimate.

By Bayes' rule,
\begin{equation}
\int\lambda_0x_i\,p_\tau(y\mid x)\,p_{\mathrm{data}}(x)\,dx
=
v_{\tau,i}(y)p_\tau(y).
\end{equation}
Applying this identity while marginalizing the conditional Kolmogorov forward equation~\eqref{eq:conditional_kfe_revised} yields
\begin{equation}
\partial_\tau p_\tau(y)
=
\sum_{i=1}^{n}
\bigl(v_{\tau,i}(y-e_i)p_\tau(y-e_i)-v_{\tau,i}(y)p_\tau(y)\bigr).
\label{eq:marginal_kfe_revised}
\end{equation}
Eq.~\eqref{eq:marginal_kfe_revised} gives $v_{\tau,i}(y)$ a direct
dynamical interpretation as the conditional transition rate for
an additional particle arrival at coordinate $i$. This is the
statistical quantity that PARM aims to estimate: given an observed particle-count state, infer its underlying conditional particle-arrival-rate field. Because
infinitesimal transitions are not directly observable from a
single acquisition, we next construct a finite-exposure target
whose conditional mean equals this transition rate.

\subsection{Finite-Exposure Rate Matching}
\label{sec:finite_exposure_matching}
\label{sec:finite_difference_revised}
\label{sec:gkl_rate_matching}

We convert the transition rate of Sec.~\ref{sec:conditional_rate} into a trainable target by considering a finite-exposure increment.
For \(p\in(0,1)\), the independent-increment property makes
\(\Delta_{p\tau\rightarrow\tau}=(Y_\tau-Y_{p\tau})/((1-p)\tau)\) conditionally unbiased for \(\lambda_0X\), and the tower property gives
\begin{equation}
\mathbb{E}[\Delta_{p\tau\rightarrow\tau}\mid Y_{p\tau}=y]
=
\mathbb{E}[\lambda_0X\mid Y_{p\tau}=y]=v_{p\tau}(y).
\label{eq:finite_difference_target_revised}
\end{equation}
Thus \(\Delta_{p\tau\rightarrow\tau}\) is a finite-exposure surrogate for the same transition-rate field that governs Eq.~\eqref{eq:marginal_kfe_revised}.

To turn the finite-exposure identity in Eq.~\eqref{eq:finite_difference_target_revised}
into a regression objective, we require a loss whose conditional
population risk is minimized by the conditional mean of the rate
target. For nonnegative rate fields, we use the
GKL divergence:
\begin{equation}
D_{\mathrm{GKL}}(u,v)
=
\frac{1}{n}\sum_{i=1}^{n}\bigl(u_i\log(u_i/v_i)-u_i+v_i\bigr),
\label{eq:gkl}
\end{equation}
where \(u\in \mathbb{R}^n_{\ge 0}\) and \(v \in \mathbb{R}^n_{>0}\), \(n\) denotes the dimension of the field, and by definition \(0\log(0/v):=0\).
In the theoretical sections this index set is \(\{1,\ldots,n\}\); in the implementation the same average is taken over a prediction domain \(\Omega\) with \(n=|\Omega|\).
For integer count targets, Eq.~\eqref{eq:gkl} matches the Poisson negative
log-likelihood up to terms independent of the prediction; more generally,
it remains well defined for nonnegative rate targets.

For a nonnegative random target \(U\) with a finite conditional mean and an observation \(Z\), for each observed value \(z\), the conditional expected GKL risk is minimized over positive predictions by
\begin{equation}
v^{*}(z)=\mathbb{E}[U\mid Z=z],
\label{eq:gkl_bayes}
\end{equation}
provided that this conditional mean is strictly positive.
If a component of \(\mathbb{E}[U\mid Z=z]\) is zero, the optimum value is defined as zero under the limiting extension of Eq.~\eqref{eq:gkl}.

If clean images were available, supervised rate matching at effective exposure \(p\tau\) would minimize the population risk
\begin{equation}
\mathcal{R}_{\mathrm{sup}}(f;p\tau)
=
\mathbb{E}\bigl[D_{\mathrm{GKL}}(\lambda_0 X,f(Y_{p\tau}))\bigr],
\label{eq:supervised_rate_risk}
\end{equation}
whose Bayes-optimal predictor is \(f^{*}(y;p\tau)=v_{p\tau}(y)\) by Eq.~\eqref{eq:gkl_bayes}.
The finite-exposure target \(\Delta_{p\tau\rightarrow\tau}\) of Eq.~\eqref{eq:finite_difference_target_revised} has the same conditional mean \(v_{p\tau}(y)\) given \(Y_{p\tau}=y\), so a GKL risk against \(\Delta_{p\tau\rightarrow\tau}\) is minimized by the same predictor.
This establishes the rate-level statistical target for PARM; the corresponding population equivalence for the thinning construction is stated in Section~\ref{sec:self_supervised_thinning}.

\subsection{Self-Supervised Rate Matching by Poisson Thinning}
\label{sec:self_supervised_thinning}

In self-supervised denoising, neither the clean particle-arrival rate \(\lambda_0X\) nor an independently acquired higher-exposure observation is available.
PARM constructs the required lower-exposure state and complementary count increment from a single noisy observation using Poisson thinning.

Given an integer-valued observation \(Y_\tau\), let \(p\in(0,1)\) be a splitting ratio independent of \((X,Y_\tau)\) and set \(q=1-p\).
Poisson thinning yields complementary views
\begin{equation}
A\mid Y_\tau\sim\operatorname{Binomial}(Y_\tau,p),
\qquad
B=Y_\tau-A,
\label{eq:poisson_thinning}
\end{equation}
so that \(A\mid X\sim\operatorname{Poisson}(\lambda_0p\tau X)\), \(B\mid X\sim\operatorname{Poisson}(\lambda_0q\tau X)\), and \(A\perp B\mid X\).
Thus, \(A\) and \(B\) match count observations at effective exposures \(p\tau\) and \(q\tau\), respectively, and either view can serve as the lower-exposure state with the other as its complementary increment.
Equivalently, the thinning construction satisfies \((A,A+B)\mid X\overset{d}{=}(Y_{p\tau},Y_\tau)\mid X\). Hence, the \(A\!\to\!B\) direction realizes exactly the finite-exposure construction of Sec.~\ref{sec:finite_exposure_matching}, while the reverse direction follows symmetrically.

We now state the population equivalence for the thinning construction.
For prediction from \(A\), the complementary counts define the exposure-normalized rate target
\(R^{A}=B/(q\tau)\); the reverse direction uses \(R^{B}=A/(p\tau)\) symmetrically.

\begin{theorem}[Population equivalence of thinning-based rate matching]
\label{thm:finite_difference_equivalence}
Let \(X\sim p_{\mathrm{data}}\) be a nonnegative random image with componentwise finite first moment, and let \(Y_\tau\mid X=x\sim\operatorname{Poisson}(\lambda_0\tau x)\).
For fixed \(p\in(0,1)\), \(q=1-p\), construct \(A,B\) as in Eq.~\eqref{eq:poisson_thinning} and set \(R^{A}=B/(q\tau)\).
For unrestricted measurable predictors with finite GKL risk, consider the supervised clean-rate risk
\[
\mathcal{R}_{\mathrm{sup}}(f;p\tau)=\mathbb{E}\bigl[D_{\mathrm{GKL}}(\lambda_0 X,f(A))\bigr]
\]
and the thinning-based rate-matching risk
\begin{equation}
\mathcal{R}_{\mathrm{self}}(f;p,\tau)=\mathbb{E}\bigl[D_{\mathrm{GKL}}(R^{A},f(A))\bigr],
\label{eq:self_supervised_risk}
\end{equation}
where \(p\) and \(\tau\) are fixed, and both expectations are taken over the joint law of \(X\), \(Y_\tau\mid X\), and \(A\mid Y_\tau\).
Assume that both risks are finite for at least one strictly positive predictor.
Then the two risks share the same Bayes-optimal predictor,
\(f^{*}(a;p\tau)=v_{p\tau}(a)=\mathbb{E}[\lambda_0X\mid A=a]\).
If a component of \(v_{p\tau}(a)\) is zero, the corresponding optimum is zero under the limiting extension of the GKL divergence.
The symmetric statement holds for the \(B\rightarrow A\) direction with \(R^{B}=A/(p\tau)\).
\end{theorem}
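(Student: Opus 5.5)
The plan is to reduce both risks to the pointwise conditional Bayes characterization in Eq.~\eqref{eq:gkl_bayes}, and then show that the two rate targets have the same conditional mean given \(A\).

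\textbf{Step 1: decomposition of the GKL risk.} Since \(D_{\mathrm{GKL}}\) is an average over pixels, each risk splits into a sum of coordinatewise terms. For a nonnegative target \(U\) and a positive prediction \(f(A)\), I expand
\[
\mathbb{E}\bigl[D_{\mathrm{GKL}}(U,f(A))\bigr]=\frac{1}{n}\sum_{i}\mathbb{E}\bigl[U_i\log U_i-U_i\bigr]+\frac{1}{n}\sum_i\mathbb{E}\bigl[f_i(A)-U_i\log f_i(A)\bigr].
\]
The first sum does not depend on \(f\). This is where the assumption of a finite risk for some strictly positive predictor is needed: it guarantees that the split is not of the form \(\infty-\infty\) over the class of finite-risk predictors. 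For the supervised target, \(\mathbb{E}[\lambda_0X_i\log(\lambda_0X_i)]\) might be infinite. To avoid that, I would compare any two finite-risk predictors through differences of risks, so that only the \(f\)-dependent part is ever used.

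\textbf{Step 2: conditioning on \(A\).} By the tower property, the \(f\)-dependent part equals
\[
\mathbb{E}\bigl[f_i(A)-\mathbb{E}[U_i\mid A]\log f_i(A)\bigr].
\]
The risk therefore depends on the target only through \(m(a)=\mathbb{E}[U\mid A=a]\). For each \(a\), the scalar function \(c\mapsto c-m_i(a)\log c\) is strictly convex on \((0,\infty)\) and is minimized at \(c=m_i(a)\) when \(m_i(a)>0\). When \(m_i(a)=0\), the infimum is approached as \(c\to0\), which is the limiting extension in the statement. Since \(A\) is discrete, measurability of the pointwise minimizer is automatic. This recovers Eq.~\eqref{eq:gkl_bayes}.

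\textbf{Step 3: equality of conditional means.} For the supervised risk, \(m(a)=\mathbb{E}[\lambda_0X\mid A=a]\). Since \(A\mid X\sim\operatorname{Poisson}(\lambda_0p\tau X)\), this equals \(v_{p\tau}(a)\) by definition. For the self-supervised risk, \(A\perp B\mid X\) and \(\mathbb{E}[B\mid X]=\lambda_0q\tau X\) give
\[
\mathbb{E}[R^A\mid A]=\mathbb{E}\bigl[\mathbb{E}[B\mid X,A]\mid A\bigr]/(q\tau)=\mathbb{E}[\lambda_0X\mid A].
\]
Finiteness of this conditional mean follows from the finite first moment of \(X\). The two risks therefore share the minimizer \(v_{p\tau}\).

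\textbf{Step 4: the remaining points.} I would check the thinning laws by computing the joint conditional probability generating function: for Binomial thinning of a Poisson count, \(\mathbb{E}[s^At^B\mid X]\) factorizes into two Poisson generating functions, which gives both the marginal laws and the conditional independence. The \(B\to A\) direction follows by exchanging \(p\) and \(q\).

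The main obstacle is the integrability bookkeeping in Step 1. Finiteness of the overall risk does not by itself make \(\mathbb{E}[U_i\log f_i(A)]\) separately finite, so I would first try to justify the decomposition by restricting to predictors whose risk difference with a fixed finite-risk reference predictor is well defined.
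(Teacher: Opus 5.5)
Your proposal is correct and follows the paper's proof: the core is your Step 3, where conditional independence of $A$ and $B$ given $X$ plus the tower property give $\mathbb{E}[R^{A}\mid A]=\mathbb{E}[\lambda_0X\mid A]=v_{p\tau}(A)$. The GKL conditional-mean characterization in Eq.~\eqref{eq:gkl_bayes}, which you re-derive in Steps 1--2 and which the paper defers to the Supplementary Material, then yields the common minimizer. Your Step 1 integrability worry disappears if you write each risk as $\sum_a \Pr(A=a)\,\mathbb{E}[D_{\mathrm{GKL}}(U,f(a))\mid A=a]$, a sum of nonnegative terms, and minimize atom by atom: finiteness at the reference predictor makes each conditional risk finite, and differences of $D_{\mathrm{GKL}}$ in the prediction argument are conditionally integrable.
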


\begin{IEEEproof}
Thinning gives \(A\mid X\sim\operatorname{Poisson}(\lambda_0p\tau X)\), \(B\mid X\sim\operatorname{Poisson}(\lambda_0q\tau X)\), and \(A\perp B\mid X\).
Hence \(\mathbb{E}[R^{A}\mid X,A]=\mathbb{E}[B/(q\tau)\mid X,A]=\lambda_0X\).
Taking the conditional expectation given \(A\) and applying the tower property yields
\(\mathbb{E}[R^{A}\mid A]=\mathbb{E}[\lambda_0X\mid A]=v_{p\tau}(A)\).
By Eq.~\eqref{eq:gkl_bayes}, this common conditional mean is the unique Bayes-optimal predictor of both \(\mathcal{R}_{\mathrm{sup}}(\,\cdot\,;p\tau)\) and \(\mathcal{R}_{\mathrm{self}}(\,\cdot\,;p,\tau)\) among unrestricted measurable maps of finite GKL risk.
The zero-component case follows from the limiting extension of Eq.~\eqref{eq:gkl}.
\end{IEEEproof}

Therefore, for every fixed splitting ratio, each thinning direction identifies the conditional particle-arrival-rate field at the effective exposure represented by its input.
PARM randomizes the splitting ratio across training samples and uses both thinning directions, so each network input is drawn from a range of effective exposures.
Concretely, \(p\sim\mathcal{U}(\varepsilon_p,1/2)\) with \(q=1-p\) and a small \(\varepsilon_p>0\).
Because both thinning directions are already used, a pair drawn with \(p\) and one drawn with \(1-p\) differ only by swapping the roles of \(A\) and \(B\), so restricting \(p\) to \((0,1/2]\) loses no training signal.
Unlike flow-matching approaches that condition the network on the flow time, PARM uses the sampled \(p\) only to construct the pair and does not feed it to the network, so the network never observes the effective exposure of its input.
We refer to this as the \emph{implicit-exposure} setting.

To characterize it, let \(J\in\{A,B\}\) denote the two equally weighted directions and define the triple
\begin{equation}
(Z,U,\tau_Z)=
\begin{cases}
(A,\,R^{A},\,p\tau), & J=A,\\[2pt]
(B,\,R^{B},\,q\tau), & J=B,
\end{cases}
\label{eq:direction_triple}
\end{equation}
where \(Z\) is the network input, \(U\) the rate target, and \(\tau_Z\) the effective exposure.
Conditioned on \(\tau_Z\) and \(J\), Theorem~\ref{thm:finite_difference_equivalence} gives \(\mathbb{E}[U\mid Z=z,\tau_Z,J]=\mathbb{E}[\lambda_0X\mid Z=z,\tau_Z,J]\).
Since the network observes only \(Z\), the law of iterated expectation marginalizes \(\tau_Z\) and \(J\) out, yielding \(\mathbb{E}[U\mid Z=z]=\mathbb{E}[\lambda_0X\mid Z=z]\).
Thus, for exact Poisson counts, the population equivalence is preserved under the direction-weighted randomized-thinning distribution, and the Bayes-optimal target is the exposure-marginalized predictor \(\mathbb{E}[\lambda_0X\mid Z=z]\).
This is the implicit-exposure implementation used by PARM, with a single shared network trained on both directions as a bidirectional multitask parameterization.

The equivalence above is exact for calibrated integer-valued Poisson observations.
When detector measurements may not be integer-valued, PARM applies count-unit calibration and expectation-preserving randomized rounding before thinning (Supplementary Material, Sec.~I-B).
In that case, the construction preserves the first moment, but the exact Poisson-splitting and population-equivalence statements should be interpreted as approximations.
Further theoretical statements are collected in Appendix~\ref{app:additional_theory}; the GKL conditional-mean argument and the expectation in Theorem~\ref{thm:finite_difference_equivalence} are given in the Supplementary Material, Sec.~II.
The practical network parameterization, training objective, and low-count stabilization are described in Sec.~\ref{sec:practical_allocation}.

\subsection{Practical Rate-Allocation Learning and Low-Count Stabilization}
\label{sec:practical_allocation}
\label{sec:soft_rate_consistency_revised}
\label{sec:training_objective}

Theorem~\ref{thm:finite_difference_equivalence} identifies the conditional particle-arrival rate.
The global magnitude of the rate depends on the acquisition exposure and count-unit calibration.
PARM therefore learns the normalized spatial allocation of the conditional rate and recovers its global particle mass separately during inference.
Accordingly, we decompose
\begin{equation}
m_\tau(y)=\sum_{j=1}^{n}v_{\tau,j}(y),
\qquad
\pi_{\tau,i}^{*}(y)=\frac{v_{\tau,i}(y)}{\sum_{j=1}^{n}v_{\tau,j}(y)},
\label{eq:rate_allocation_decomposition}
\end{equation}
whenever \(m_\tau(y)>0\), where \(\pi_\tau^{*}\) is the Bayes-optimal normalized allocation target.

Let \(h_\theta(Z)\in\mathbb{R}^{|\Omega|}\) denote the unconstrained network output for an input count field \(Z\), where \(\Omega\) is the prediction domain: the spatial domain for 2D inputs and the complete spatiotemporal output volume for 3D inputs.
This \(\Omega\) is the implementation counterpart of the index set \(\{1,\ldots,n\}\) used above, with \(n=|\Omega|\).
The predicted allocation is the softmax normalization
\begin{equation}
\pi_{\theta,i}(Z)
=
\frac{\exp(h_{\theta,i}(Z))}{\sum_{j\in\Omega}\exp(h_{\theta,j}(Z))},
\qquad i\in\Omega,
\label{eq:main_softmax_allocation}
\end{equation}
which maps $h_\theta(Z)$ to the probability simplex over $\Omega$.
Under a GKL risk with target $\mathbb{E}[\lambda_0X\mid Z=z]$, the population-optimal allocation is
\begin{equation}
\pi_i^{*}(z)
=
\frac{\mathbb{E}[\lambda_0X_i\mid Z=z]}{\sum_{j\in\Omega}\mathbb{E}[\lambda_0X_j\mid Z=z]}.
\end{equation}
Because the splitting ratio is not provided to the network, the conditional mean is marginalized over the exposure states induced by randomized thinning.
The corresponding GKL mass--allocation factorization is given in Appendix~\ref{app:rate_allocation}.

Rather than dividing by the sampled exposure fractions that define \(R^{A}\) and \(R^{B}\), the implementation applies empirical unit-mean normalization within each minibatch.
The practical bidirectional targets are \(\widetilde R^{A}=B/\overline B\) and \(\widetilde R^{B}=A/\overline A\),
where $\overline A$ and $\overline B$ denote the corresponding minibatch means, and $\widetilde R^{A}$ supervises the prediction from $A$.
This preserves relative allocation while removing a common empirical scale, which is an implementation-level approximation to the ideal exposure-normalized targets.

The bidirectional softmax--GKL objective averages the two complementary directions,
\begin{equation}
\mathcal{L}_{\mathrm{bi}}(\theta)
=
-\mathbb{E}\!\left[
\frac{1}{2|\Omega|}
\sum_{i\in\Omega}
\bigl(\widetilde R_i^{A}\log \pi_{\theta,i}(A)+\widetilde R_i^{B}\log \pi_{\theta,i}(B)\bigr)
\right].
\label{eq:bidirectional_allocation_loss}
\end{equation}
At extremely low counts, complementary thinning can produce substantially different sparse realizations, even though both views originate from the same latent scene.
We therefore add a soft cross-view consistency regularization on the raw network outputs,
\begin{equation}
\mathcal{R}_{\mathrm{cons}}(\theta)
=
\mathbb{E}\!\left[
\frac{1}{|\Omega|}\left\|h_\theta(A)-h_\theta(B)\right\|_1
\right].
\label{eq:rate_consistency_revised}
\end{equation}
The complete training objective is
\begin{equation}
\mathcal{L}_{\mathrm{total}}(\theta)
=
\mathcal{L}_{\mathrm{bi}}(\theta)+\gamma(\widehat{\Lambda}_{\mathrm{occ}})\mathcal{R}_{\mathrm{cons}}(\theta),
\label{eq:total_loss_revised}
\end{equation}
where $\widehat p_0$ is the fraction of zero-valued sites in the complete raw observation, $\widehat{\Lambda}_{\mathrm{occ}}=-\log\widehat p_0$ is a global occupancy statistic, and the consistency weight is
\begin{equation}
\gamma(\widehat{\Lambda}_{\mathrm{occ}})
=
\gamma_0\frac{\Lambda_{\mathrm{th}}}{\widehat{\Lambda}_{\mathrm{occ}}}\,
\mathbf{1}\{\widehat{\Lambda}_{\mathrm{occ}}<\Lambda_{\mathrm{th}}\},
\label{eq:gamma_schedule_revised}
\end{equation}
with $\gamma_0=0.01$ and $\Lambda_{\mathrm{th}}=0.4$.
When zeros are abundant, $\widehat p_0$ is large, $\widehat{\Lambda}_{\mathrm{occ}}$ is small, and consistency is active; when zeros are scarce, $\widehat p_0$ is small, $\widehat{\Lambda}_{\mathrm{occ}}$ is large, and the indicator disables consistency.
If $\widehat p_0=0$, then $-\log\widehat p_0=+\infty$, which likewise turns consistency off.
The threshold $\Lambda_{\mathrm{th}}=0.4$ is an empirical operating point: at $\Lambda=0.4$, the conditional multi-event fraction $\Pr(N\geq2\mid N>0)$ is about $20\%$.
Thus $\widehat{\Lambda}_{\mathrm{occ}}$ is a global sparsity indicator, and the consistency term acts as a weak low-count stabilizer.
The following section describes how the learned allocation is combined with the observed global particle mass to obtain the final reconstruction.

\subsection{Inference and Observed-Mass-Preserving Reconstruction}
\label{sec:mass_preserving_reconstruction}

The rate-allocation formulation separates two quantities: the relative distribution of particle arrivals, inferred by the network, and the global particle mass, obtained from the observation.
This separation allows PARM to restore an acquisition-scale image without the network predicting an absolute exposure-dependent magnitude.

From Sec.~\ref{sec:conditional_rate}, the conditional rate field is \(v_\tau(y)=\mathbb{E}[\lambda_0X\mid Y_\tau=y]\), whose normalized form is the Bayes-optimal allocation \(\pi_\tau^{*}(y)\) of Eq.~\eqref{eq:rate_allocation_decomposition}.
The posterior mean of the clean count field satisfies \(\mathbb{E}[\lambda_0\tau X\mid Y_\tau]=\tau v_\tau(Y_\tau)\), so the multiplicative factor \(\tau\) cancels from this normalization. Thus, the allocation is invariant to a common rate--count scaling, although its conditional dependence on the observation distribution may still vary with exposure.
When exposure is not provided as a network input, the trained allocation corresponds to the exposure-marginalized predictor of Sec.~\ref{sec:self_supervised_thinning}.
Theorem~\ref{thm:finite_difference_equivalence} characterizes inputs drawn from the randomized thinning distribution, so applying the trained network to the complete observation \(Y_\tau\) is a practical inference rule rather than a direct consequence of the theorem.

In the reported experiments, the inference allocation aggregates \(K=3\) network evaluations: the original input and sequentially thinned versions are each processed by the network, and their positive candidate fields are combined by empirical score-weighted aggregation (Supplementary Material, Sec.~I-D); this rule is not a Bayesian posterior.
Let \(V_\theta^{\mathrm{inf}}(Y_\tau)>0\) denote the resulting positive, unnormalized field after this aggregation.
Let \(\Omega\) denote the complete reconstructed spatial or spatiotemporal domain.
We normalize the aggregated positive field over \(\Omega\):
\begin{equation}
\pi_{\theta,i}^{\mathrm{inf}}(Y_\tau)
=
\frac{V_{\theta,i}^{\mathrm{inf}}(Y_\tau)}{\sum_{j\in\Omega}V_{\theta,j}^{\mathrm{inf}}(Y_\tau)},
\qquad i\in\Omega.
\label{eq:inference_allocation}
\end{equation}
By construction, $\sum_{i\in\Omega}\pi_{\theta,i}^{\mathrm{inf}}(Y_\tau)=1$.

The final PARM reconstruction distributes the observed total count according to this allocation.
Let \(\widehat{M}_\tau=\sum_{j\in\Omega}Y_{\tau,j}\) denote the observed particle mass over \(\Omega\). Then
\begin{equation}
\widehat{C}_{\tau,i}=\widehat{M}_\tau\,\pi_{\theta,i}^{\mathrm{inf}}(Y_\tau),
\label{eq:mass_allocation_reconstruction}
\end{equation}
so that \(\sum_{i\in\Omega}\widehat{C}_{\tau,i}=\widehat{M}_\tau\): the network determines how the observed particle mass is distributed, while the measurement determines its global magnitude.
Poisson additivity makes \(\widehat{M}_\tau\) a conditionally unbiased estimate of the latent total count \(M_\tau=\sum_{i\in\Omega}\lambda_0\tau X_i\), and because it is aggregated over \(\Omega\), its relative fluctuation is smaller than that of individual low-count sites.

This observed-mass-preserving reconstruction is a mass--allocation factorization, not an exact Bayes estimator: \(\widehat{M}_\tau\) is conditionally unbiased but is not the posterior mean, so when \(\widehat{M}_\tau=0\) the reconstruction is identically zero even though the posterior mean can be positive.

When $\Lambda_{\mathrm{acq}}=\lambda_0\tau$ is known, the exposure-independent latent intensity estimate is $\widehat{X}=\widehat{C}_\tau/\Lambda_{\mathrm{acq}}$.
The equivalence result assumes calibrated integer-valued Poisson counts, known exposure intervals, and unrestricted predictors optimized under GKL; the consistency regularizer and observed-mass-preserving reconstruction are practical components outside the theorem's guarantee (Appendix~\ref{app:theoretical_scope}).

\section{Experiments}
\label{sec:experiments}

We evaluate PARM in three settings:
(i) a controlled synthetic Cameraman benchmark with known ground truth,
(ii) standard natural videos corrupted by synthetic Poisson noise, and
(iii) real low-dose electron microscopy without clean ground truth.
For the first two settings, the clean source data permit quantitative evaluation using peak signal-to-noise ratio (PSNR) and structural similarity index (SSIM).
For electron microscopy, we assess structural preservation and agreement with local multi-frame, high-SNR references used only as references, not as ground truth.

We compare PARM against three categories of baselines, including model-based frameworks, single-image deep learning methods, and sequence-based deep learning approaches.
The model-based methods are BM3D~\cite{dabov2007image}, VBM4D~\cite{maggioni2012vbm4d}, and PURE-LET~\cite{luisier2010fast}.
The single-image deep learning methods include Noise2Void (N2V)~\cite{krull2019noise2void} and Pixel2Pixel~\cite{ma2025pixel2pixel}, while the sequence-based methods are SHINE~\cite{shine}, UDVD~\cite{udvd}, and bit2bit~\cite{bit2bit}.
Methods without a native volumetric or video formulation (such as BM3D, PURE-LET, and Pixel2Pixel) are evaluated only in the 2D setting, and we do not introduce new 3D backbones solely for comparison.

\subsection{Implementation Details and Baseline Settings}
\label{sec:implementation}

PARM uses a 2D U-Net for single-frame experiments and a 3D U-Net for sequence-based experiments.
The corresponding network input is a noisy count image or spatiotemporal count sequence, and neither the clean image nor the exposure level is provided as an additional input.
For the synthetic Cameraman benchmark, the 3D network input size is \(32\times64\times64\) (\(T\times H\times W\)).

PARM is optimized separately for each evaluated sequence and particle level. The 2D and 3D networks are initialized from their corresponding architecture-specific checkpoints pretrained on the noisy Flower sequence using the same self-supervised PARM objective.
No clean frames or ground-truth targets are used during pretraining. Following initialization, all network parameters are optimized on the noisy target sequence at the corresponding particle level. Thus, the final models are specific to each sequence and particle level, though they share a fixed self-supervised initialization. Because Flower is included in Table~\ref{table_noisy_videos}, its result corresponds to same-sequence self-supervised warm-starting rather than evaluation on a previously unseen sequence.

We preserve the original training settings of each baseline whenever possible.
Sequence-specific self-supervised training is used only when supported by the original method.
For N2V, SHINE, and bit2bit, we follow the recommended configurations and change only input dimensions or noise-related parameters required for compatibility with the evaluated data.
UDVD is evaluated with its original default configuration, while BM3D and VBM4D use their standard settings.
No clean targets are used for PARM or any self-supervised baseline.
PARM(Sup.) is trained with clean targets and is included only as an empirical supervised reference.
Additional qualitative comparisons on Cameraman, natural video, and electron microscopy are provided in the Supplementary Material.

\subsection{Synthetic Poisson Denoising on Cameraman}
\label{sec:simulation}

\subsubsection{Dataset and setup}

We construct a controlled benchmark from the \(256\times256\) Cameraman image.
A spatial trajectory shaped as the digit "2" is defined on the image, and \(128\times128\) patches sampled along this trajectory form an ordered clean stack.
We then corrupt the clean patches with Poisson shot noise at multiple particle levels.
The paired clean and noisy stacks permit direct quantitative evaluation under a controlled particle-counting model.

We evaluate 2D and 3D variants of PARM using the same self-supervised rate-matching objective.
The 2D model processes each noisy image independently, whereas the 3D model jointly processes ordered images to use their spatiotemporal context.
We also report PARM(Sup.), trained with clean targets, solely as an empirical supervised reference.

\begin{figure*}[t]
 \centering
 \includegraphics[width=1.0\textwidth]{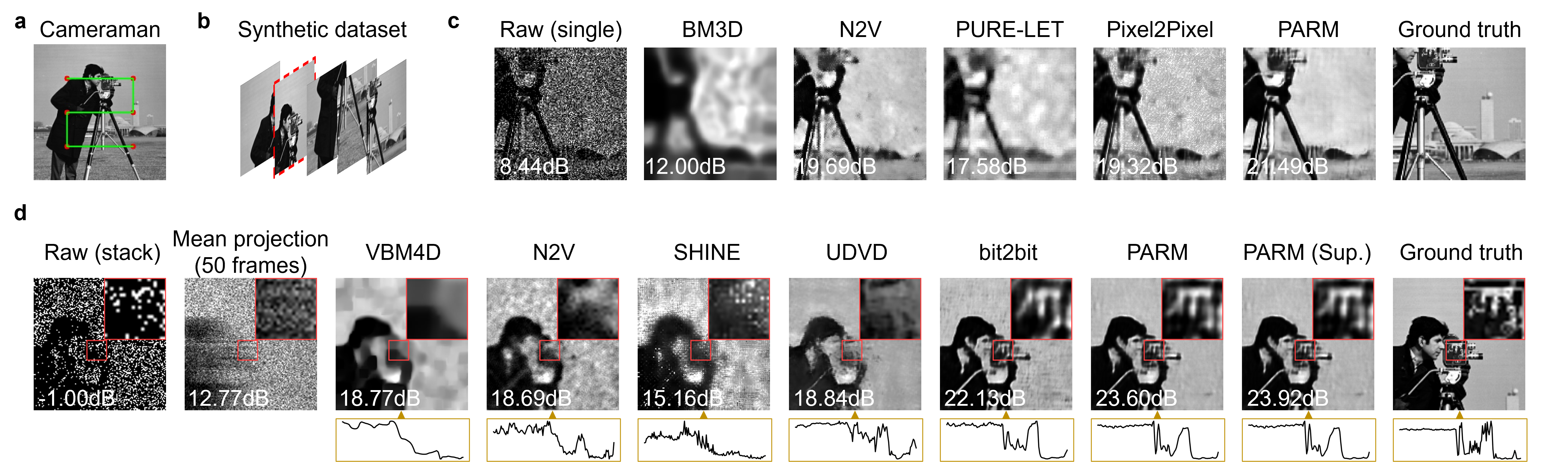}
 \caption{
 Synthetic Cameraman benchmark for 2D and 3D Poisson denoising.
 (a) Original Cameraman image and the "2"-shaped cropping trajectory used to generate the synthetic stack.
 (b) Illustration of the resulting ordered synthetic dataset, where consecutive patches contain smoothly varying spatial content.
 (c) 2D single-frame Poisson denoising comparison at $\Lambda=2$.
 PARM suppresses particle noise while retaining object boundaries and local intensity transitions.
 (d) 3D stack-based Poisson denoising comparison at $\Lambda=0.25$.
 The red box highlights the camera.
 PARM shows lower visible particle fluctuations and sharper local transitions in this region.
 The magnified insets and line profiles further show the preservation of local transitions and structural peaks.
 PSNR values are annotated on the reconstructions in (c) and (d).
 }
 \label{fig:cameraman}
\end{figure*}

\subsubsection{Quantitative results}

Table~\ref{table_cameraman} reports PSNR and SSIM averaged over all synthetic image frames at each particle level.
PARM achieved the highest PSNR and SSIM among the self-supervised methods at every tested particle level.
Its advantage was most evident at low counts, where the compared methods retained particle noise or oversmoothed fine structures.
PARM(Sup.) remained stronger, as expected from clean supervision, but PARM approached this supervised reference without using clean targets.

\begin{table*}[t]
\caption{
Quantitative comparison for the 3D stack-based Cameraman experiment.
PSNR and SSIM are averaged over all extracted images at each particle level. 
Bold and underlined values indicate the best and second-best results, respectively, among all methods except PARM(Sup.). PARM(Sup.) is shown only as a supervised reference and is excluded from this ranking. Results are averaged over 5 independent noise realizations.
}
\label{table_cameraman}
 \centering
 \resizebox{\textwidth}{!}{
 \tiny
 \begin{tabular}{||c||c|c|c|c|c|c||}
 \hline
 $\Lambda$ & 2 & 1 & 0.5 & 0.25 & 0.125 & 0.0625 \\
 \hline
 Method & \multicolumn{6}{|c||}{PSNR (dB) / SSIM} \\
 \hline
 VBM4D & 25.08 / \underline{0.796} & 22.71 / \underline{0.732} & 20.46 / 0.643 & 18.72 / 0.549 & 17.15 / 0.450 & 15.67 / 0.361 \\
N2V & 20.96 / 0.527 & 20.12 / 0.498 & 18.99 / 0.435 & 18.27 / 0.383 & 16.98 / 0.298 & 16.16 / 0.238 \\
SHINE & 18.13 / 0.405 & 17.84 / 0.400 & 16.91 / 0.376 & 15.64 / 0.292 & 14.43 / 0.215 & 12.61 / 0.121 \\
UDVD & 22.93 / 0.736 & 21.82 / 0.700 & 20.63 / 0.657 & 19.40 / 0.587 & 18.34 / 0.524 & 17.37 / 0.450 \\
bit2bit & \underline{25.40} / 0.781& \underline{23.24} / 0.703& \underline{22.35} / \underline{0.683}& \underline{22.03} / \underline{0.678}& \underline{20.49} / \underline{0.582}& \underline{19.01} / \underline{0.487}\\
PARM & \textbf{26.12} / \textbf{0.817}& \textbf{24.79} / \textbf{0.787}& \textbf{23.39} / \textbf{0.731}& \textbf{22.28} / \textbf{0.708}& \textbf{20.99} / \textbf{0.646}& \textbf{19.46} / \textbf{0.588}\\
\hline
 PARM(Sup.) & {26.42} / {0.832} & {24.99} / {0.794} & {23.78} / {0.762} & {22.84} / {0.736} & {21.31} / {0.677} & {19.75} / {0.619} \\
 \hline
 \end{tabular}
 }
\end{table*}

\subsubsection{Qualitative analysis}

Fig.~\ref{fig:cameraman} presents representative 2D and 3D reconstructions.
In the 2D comparison in Fig.~\ref{fig:cameraman}(c), PARM reduced noise while retaining the tripod, body contour, and local edges.
In the 3D comparison in Fig.~\ref{fig:cameraman}(d), PARM recovered the camera region with clearer contrast and shape.
The compared methods either blurred this region into the background or retained substantial particle noise.
The vertical line profiles show sharper intensity variations for PARM, consistent with Table~\ref{table_cameraman}.

\subsection{Poisson Denoising on Natural Video}
\label{sec:video_denoising}

We next evaluate PARM on six standard grayscale video sequences: \textit{Coastguard}, \textit{Bus}, \textit{Flower}, \textit{Foreman}, \textit{Miss America}, and \textit{Tennis}.
Each clean frame is corrupted by synthetic Poisson noise at multiple particle levels.
We measure reconstruction quality against the original clean video using PSNR and SSIM.
For PARM and stack-compatible self-supervised baselines, the input size is \(32\times128\times128\) (\(T\times H\times W\)).

\begin{figure}[t]
 \centering
 \includegraphics[width=1.0\linewidth]{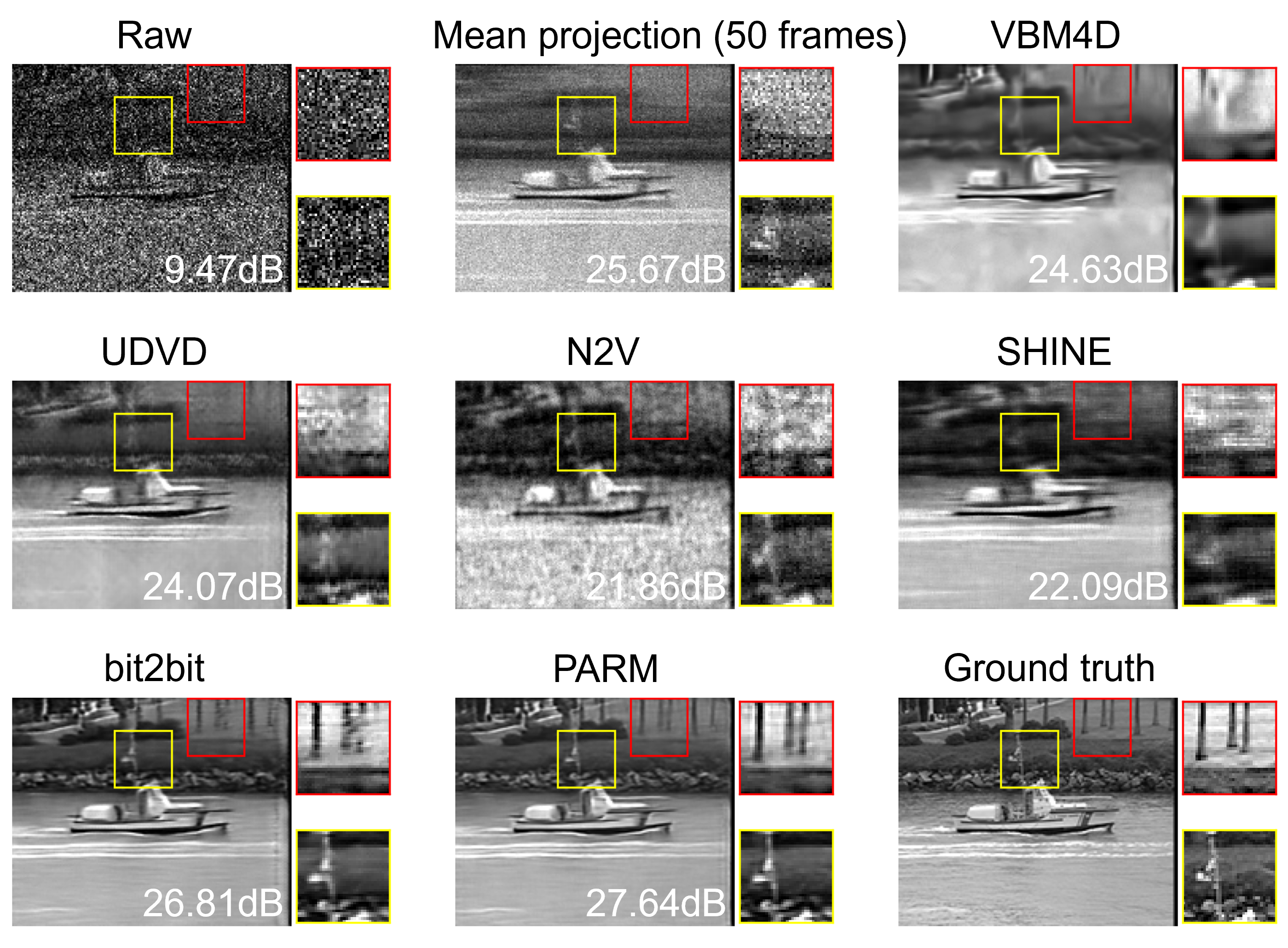}
 \caption{
 Qualitative comparison on the \textit{Coastguard} sequence under synthetic Poisson noise at $\Lambda=2$.
 Red and yellow boxes mark fine-structure regions, with enlarged views shown beside each reconstruction.
 In the red region, PARM resolves the thin tree trunks more clearly, whereas other methods merge or obscure them.
 In the yellow region, PARM recovers the sail with a recognizable shape and boundary.
 }
 \label{fig:noisy_videos}
\end{figure}

Table~\ref{table_noisy_videos} reports results for all six sequences.
PARM ranked first or second for most combinations of sequence, particle level, and metric, although competing methods performed better in several settings.
The strongest gains occurred in scenes where thin structures and local textures were difficult to retain.
In Fig.~\ref{fig:noisy_videos}, these gains correspond to clearer tree trunks in the red region and a more recognizable sail in the yellow region.

\begin{table*}[!htbp]
\caption{
Quantitative comparison on six noisy video sequences.
PSNR (dB) and SSIM are reported for each particle level $\Lambda$.
Bold numbers indicate the best result for each metric and setting, and underlined numbers indicate the second-best result. Results are averaged over 5 independent noise realizations.
}
\label{table_noisy_videos}
 \centering
 \tiny
 \resizebox{\textwidth}{!}{
 \begin{tabular}{||c||c|c|c|c|c|c||}
 \hline
 Method & VBM4D & N2V & SHINE & UDVD & bit2bit & PARM \\
 \hline
 & \multicolumn{6}{|c||}{Coastguard} \\
 \hline
 2 & 24.84 / 0.628 & 22.47 / 0.502 & 22.75 / 0.599 & 24.10 / 0.604 & \underline{26.40} / \underline{0.746} & \textbf{27.25} / \textbf{0.767}\\
 1 & 23.24 / 0.537 & 19.98 / 0.342 & 21.24 / 0.513 & 23.32 / 0.563 & \underline{25.18} / \underline{0.701} & \textbf{26.26} / \textbf{0.723}\\
 0.5 & 21.55 / 0.447 & 17.55 / 0.193 & 18.81 / 0.432 & 22.57 / 0.518 & \underline{24.58} / \underline{0.654} & \textbf{25.12} / \textbf{0.666} \\
 0.25 & 20.17 / 0.375 & 15.65 / 0.122 & 17.61 / 0.353 & 21.72 / 0.472 & \underline{23.07} / \underline{0.543} & \textbf{23.91} / \textbf{0.592} \\
 0.125 & 18.75 / 0.316 & 10.94 / 0.035 & 17.39 / 0.320 & \underline{20.84} / \underline{0.416} & 19.57 / 0.327 & \textbf{22.33} / \textbf{0.471} \\
 0.0625 & 16.86 / 0.248 & 8.64 / 0.021 & 17.32 / 0.289 & \underline{19.71} / \underline{0.339} & 19.48 / 0.325 & \textbf{20.68} / \textbf{0.387} \\
 \hline
 & \multicolumn{6}{|c||}{Bus} \\
 \hline
 2 & 24.06 / 0.676 & 21.33 / 0.553 & 23.11 / 0.678 & 21.80 / 0.567 & \underline{24.36} / \underline{0.709} & \textbf{25.22} / \textbf{0.737} \\
 1 & 22.20 / 0.578 & 20.16 / 0.465 & 22.18 / 0.631 & 21.17 / 0.524 & \underline{23.11} / \underline{0.651} & \textbf{24.01} / \textbf{0.687} \\
 0.5 & 20.45 / 0.472 & 19.14 / 0.371 & 21.70 / \underline{0.590} & 20.46 / 0.471 & \underline{21.99} / 0.588 & \textbf{22.79} / \textbf{0.627} \\
 0.25 & 19.24 / 0.397 & 17.98 / 0.290 & 20.51 / \underline{0.530} & 19.81 / 0.422 & \underline{20.82} / 0.502 & \textbf{21.57} / \textbf{0.555} \\
 0.125 & 18.34 / 0.346 & 17.09 / 0.208 & 19.56 / \underline{0.456} & 19.31 / 0.387 & \underline{20.13} / \underline{0.456} & \textbf{20.43} / \textbf{0.480} \\
 0.0625 & 17.27 / 0.298 & 15.64 / 0.143 & 18.18 / 0.383 & 18.84 / 0.353 & \underline{19.32} / \underline{0.394} & \textbf{19.55} / \textbf{0.419} \\
 \hline
 & \multicolumn{6}{|c||}{Flower} \\
 \hline
 2 & 19.41 / 0.642 & 15.45 / 0.281 & 18.93 / 0.664 & 16.87 / 0.465 & \underline{20.22} / \underline{0.700} & \textbf{21.08} / \textbf{0.750} \\
 1 & 16.92 / 0.444 & 14.92 / 0.223 & 17.55 / 0.559 & 16.23 / 0.395 & \underline{18.93} / \underline{0.638} & \textbf{19.83} / \textbf{0.688} \\
 0.5 & 15.54 / 0.302 & 13.82 / 0.154 & 16.37 / 0.456 & 15.71 / 0.325 & \underline{17.89} / \underline{0.563} & \textbf{18.58} / \textbf{0.620} \\
 0.25 & 14.94 / 0.241 & 12.33 / 0.088 & 15.15 / 0.360 & 15.13 / 0.241 & \underline{16.78} / \underline{0.445} & \textbf{17.42} / \textbf{0.516} \\
 0.125 & 14.36 / 0.198 & 9.91 / 0.029 & 13.67 / 0.287 & 14.69 / 0.176 & \underline{16.17} / \underline{0.379} & \textbf{16.33} / \textbf{0.383} \\
 0.0625 & 13.42 / 0.152 & 9.09 / 0.027 & 11.96 / 0.194 & 14.43 / 0.161 & \textbf{15.41} / \textbf{0.268} & \underline{15.36} / \underline{0.265} \\
 \hline
 & \multicolumn{6}{|c||}{Foreman} \\
 \hline
 2 & \underline{25.89} / 0.694 & 23.74 / 0.555 & 25.70 / \underline{0.713} & 23.81 / 0.603 & 24.70 / 0.676 & \textbf{26.15} / \textbf{0.719} \\
 1 & \underline{24.37} / 0.637 & 22.60 / 0.474 & 23.66 / \underline{0.666} & 23.12 / 0.578 & 23.31 / 0.617 & \textbf{24.72} / \textbf{0.682} \\
 0.5 & 22.74 / 0.575 & 21.33 / 0.391 & 19.43 / 0.596 & 22.37 / 0.543 & \underline{23.02} / \underline{0.609} & \textbf{23.76} / \textbf{0.646}\\
 0.25 & 21.29 / 0.509 & 19.57 / 0.256 & 12.82 / 0.458 & 21.44 / 0.500 & \underline{21.98} / \underline{0.567} & \textbf{22.21} / \textbf{0.607}\\
 0.125 & 19.00 / 0.413 & 17.53 / 0.168 & 12.28 / 0.416 & 20.73 / 0.472 & \underline{21.06} / \underline{0.525} & \textbf{21.10} / \textbf{0.571}\\
 0.0625 & 16.19 / 0.304 & 16.48 / 0.131 & 12.63 / 0.260 & \underline{20.00} / 0.449 & \textbf{20.38} / \underline{0.479} & 19.67 / \textbf{0.535}\\
 \hline
 & \multicolumn{6}{|c||}{Miss America} \\
 \hline
 2 & \textbf{32.36} / \underline{0.858} & 28.02 / 0.700 & 19.01 / 0.796 & 29.52 / 0.802 & 30.83 / 0.833 & \underline{31.69} / \textbf{0.863} \\
 1 & \underline{30.52} / \underline{0.833} & 26.87 / 0.638 & 18.79 / 0.773 & 28.92 / 0.787 & 30.34 / 0.829 & \textbf{30.88} / \textbf{0.849}\\
 0.5 & 28.39 / 0.802 & 25.49 / 0.562 & 21.49 / 0.784 & 28.04 / 0.769 & \textbf{29.93} / \underline{0.826} & \underline{29.89} / \textbf{0.833}\\
 0.25 & 26.72 / 0.756 & 23.79 / 0.441 & 21.86 / 0.758 & 27.11 / 0.748 & \underline{28.75} / \underline{0.805} & \textbf{29.08} / \textbf{0.817}\\
 0.125 & 24.63 / 0.675 & 21.26 / 0.294 & 23.99 / 0.738 & 26.19 / 0.730 & \underline{27.64} / \underline{0.780} & \textbf{28.05} / \textbf{0.801}\\
 0.0625 & 21.83 / 0.560 & 19.68 / 0.179 & 17.88 / 0.638 & 25.25 / 0.706 & \underline{26.19} / \underline{0.745} & \textbf{26.32} / \textbf{0.779}\\
 \hline
 & \multicolumn{6}{|c||}{Tennis} \\
 \hline
 2 & \underline{24.17} / \underline{0.590} & 18.33 / 0.263 & 23.38 / 0.585 & 20.47 / 0.467 & 23.09 / 0.568 & \textbf{24.74} / \textbf{0.614} \\
 1 & 21.98 / 0.524 & 17.45 / 0.185 & \underline{22.21} / \underline{0.543} & 19.85 / 0.431 & 21.85 / 0.526 & \textbf{23.65} / \textbf{0.579} \\
 0.5 & 19.76 / 0.438 & 16.09 / 0.090 & 20.75 / \underline{0.500} & 19.40 / 0.403 & \underline{21.12} / 0.491 & \textbf{22.38} / \textbf{0.539} \\
 0.25 & 18.77 / 0.381 & 15.79 / 0.103 & 18.90 / \underline{0.440} & 18.92 / 0.379 & \underline{19.84} / 0.417 & \textbf{21.09} / \textbf{0.491} \\
 0.125 & 17.73 / 0.317 & 11.98 / 0.035 & 14.79 / 0.359 & 18.54 / 0.362 & \underline{19.04} / \underline{0.377} & \textbf{19.52} / \textbf{0.438} \\
 0.0625 & 15.87 / 0.237 & 9.45 / 0.025 & 11.85 / 0.263 & 18.19 / \underline{0.357} & \underline{18.49} / 0.328 & \textbf{18.52} / \textbf{0.397} \\
 \hline
 \end{tabular}
 }
\end{table*}
\subsection{Real Low-Dose Electron Microscopy}
\label{sec:real_em}

Finally, we evaluate PARM on a real low-dose electron microscopy movie of
platinum (Pt) nanoparticles supported on ceria and acquired with a direct electron
detector~\cite{em}. The movie was recorded at an electron dose rate of
approximately $2000\,\text{e}^- \text{\AA}^{-2}\text{s}^{-1}$ with a
frame rate of 75 frames per second (fps), corresponding to a 13-ms exposure and approximately
$26\,\text{e}^- \text{\AA}^{-2}$ per frame (${\sim}0.2\,\text{e}^-$ per
pixel). The single-frame signal-to-noise ratio (SNR) measured in vacuum is approximately 0.45,
indicating a strongly dose-limited regime.

Because clean ground truth is unavailable, we assess reconstruction
consistency against group-averaged measurements at different temporal
scales. For a group size $G$, the noisy and denoised stacks
are partitioned into non-overlapping groups of $G$ consecutive frames.
Within each group, the noisy and denoised frames are separately averaged
to obtain a group-averaged noisy reference and a group-averaged denoised
reconstruction. We compute the Pearson correlation coefficient between these two
images and average the resulting correlations
over all temporal groups. Increasing $G$ suppresses stochastic
frame-to-frame fluctuations in the noisy measurement, producing a
progressively higher-SNR reference for assessing structural consistency.
These group-averaged references are used only for
evaluation and are never used during network training.
Because each reference is formed from the same measured sequence, this
correlation assesses consistency rather than fidelity to an unknown clean image.

To assess spatial variability, we additionally separate the original
stack into four non-overlapping spatially subsampled stacks using
the four offsets of $2\times2$ downsampling. The complete denoising and
correlation analyses are performed independently on each spatial subset.
In Fig.~\ref{fig:Pt_em}(c), the bar height represents the mean correlation
across the four spatial subsets, while the error bars indicate the
corresponding standard deviation.

\begin{figure*}[!htbp]
 \centering
 \includegraphics[width=1.0\linewidth]{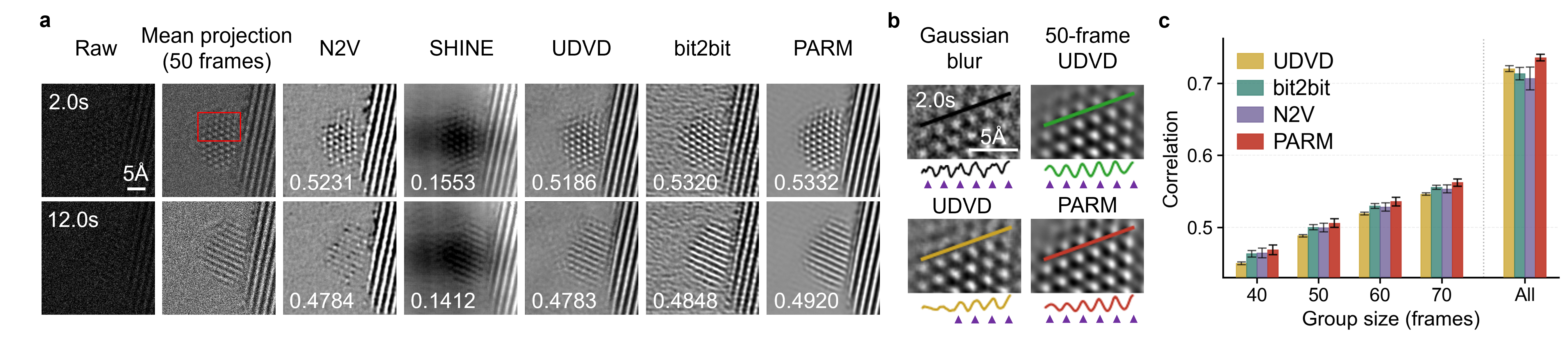}
 \caption{
 Real low-dose electron microscopy denoising on Pt nanoparticle data.
 (\textbf{a}) Representative results at two time points.
 Numbers on the denoised images denote correlations with a local multi-frame reference.
 PARM suppresses background noise while retaining lattice-like nanoparticle structures.
 (\textbf{b}) Detailed comparison of local nanoparticle structures and intensity profiles.
 The region of interest in (\textbf{b}) corresponds to the red box in (\textbf{a}).
 Gaussian blur and 50-frame UDVD provide reference comparisons.
 Gaussian blur suppresses noise but removes particle boundaries, whereas 50-frame UDVD uses longer temporal context but blurs some single-frame details.
 PARM produces a cleaner single-frame reconstruction while retaining individual nanoparticle contrast.
 Its intensity profile preserves periodic peaks that the reference methods attenuate or merge.
 (\textbf{c}) Average correlation as a function of the group size used to construct the evaluation reference. The displayed correlations in (\textbf{a}) use $G=50$. Error bars in (\textbf{c}) denote the sample standard deviation across the four $2\times2$ spatial-subsampling offsets; they do not represent independently acquired microscopy experiments.
 }
 \label{fig:Pt_em}
\end{figure*}

Fig.~\ref{fig:Pt_em}(a) presents representative reconstructions at two time points.
Compared with N2V, SHINE, UDVD, and bit2bit, PARM produced cleaner backgrounds while retaining the periodic nanoparticle structure.
The local comparison in Fig.~\ref{fig:Pt_em}(b) further distinguishes structural preservation from smoothing.
Gaussian blur reduced high-frequency noise but removed nanoparticle boundaries.
The 50-frame UDVD reconstruction used longer temporal context but blurred some single-frame details.
PARM retained clearer nanoparticle contrast without heavy temporal averaging.
Its intensity profiles preserved visible nanoparticle peaks that the reference methods weakened, broadened, or merged.
Fig.~\ref{fig:Pt_em}(c) reports average correlation for different reference group sizes.
PARM achieved the highest measured correlation under the evaluated consistency protocol.
Together, these results support its applicability to real low-dose microscopy without clean supervision.

\subsection{Ablation Study}
\label{sec:ablation}

We analyze three practical factors: the base consistency coefficient, the
temporal stack size, and the thinning-based inference aggregation mode.
Fig.~\ref{fig:ablation} summarizes the consistency-coefficient ablation
and the stack-size ablation on the 3D Cameraman benchmark.

The implemented regularization strength follows the occupancy-dependent
schedule in Eq.~\eqref{eq:gamma_schedule_revised}.
On Cameraman, a large base coefficient degrades the displayed low-count
settings.
The Coastguard sweep in the Supplementary Material, Sec.~IV-B, shows a
clearer cross-regime tradeoff: strong consistency helps only in the
sparsest regimes and reduces PSNR at higher particle levels.
We use $\gamma_0=0.01$ as the default value.

Increasing the temporal stack size generally improved PSNR,
particularly at low particle levels.
Beyond 32 frames, memory use and runtime continued to increase. We therefore use a stack depth of 32 to balance reconstruction
quality and computational cost.

We also compare thinning-based inference modes over multiple aggregation
steps. In the reported Cameraman ablation, the aggregation rule and the
number of evaluations have negligible effect on reconstruction quality.
All main experiments use score-weighted aggregation with \(K=3\) as a fixed
empirical inference setting; direct prediction is included only as the
\(K=1\) ablation.
Extended ablation tables, including exposure conditioning, softmax normalization, and Flower initialization, are reported in the Supplementary Material, Sec.~IV.

\begin{figure}[t]
 \centering
 \includegraphics[width=1.0\linewidth]{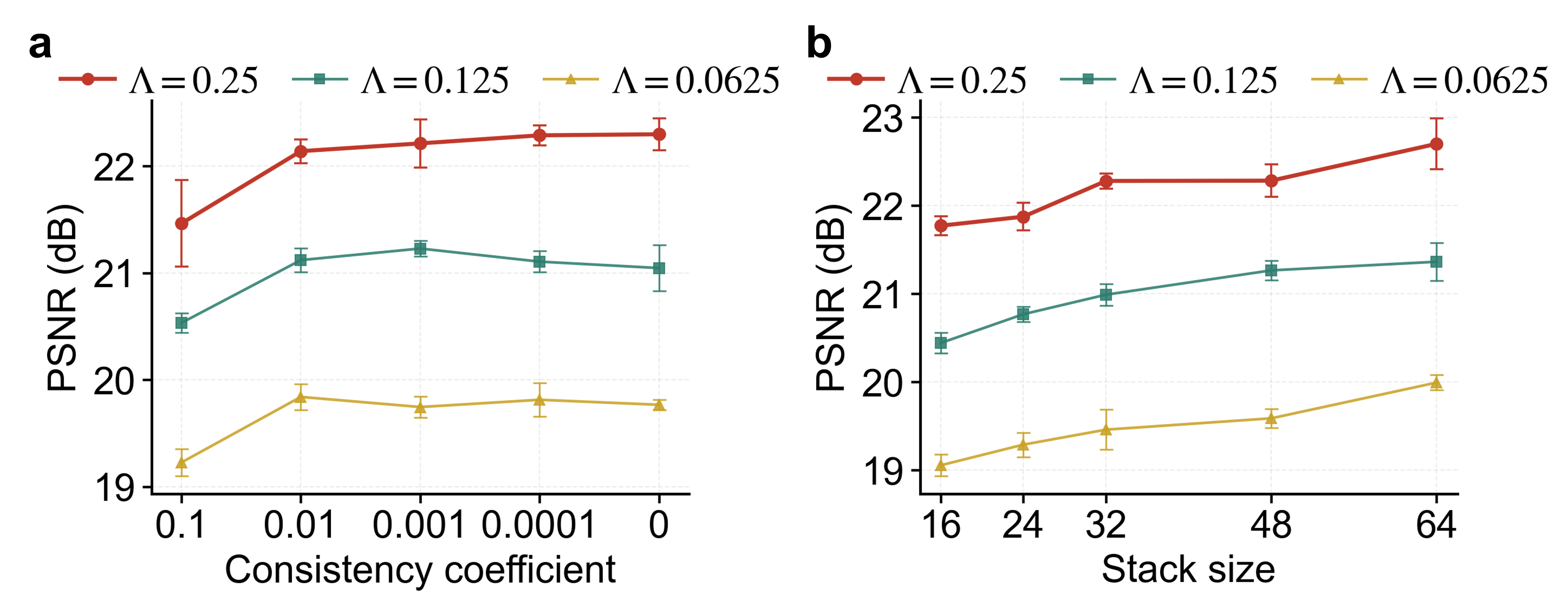}
 \caption{Ablation studies.
 (\textbf{a}) Effect of the base consistency coefficient $\gamma_0$ on the
 3D Cameraman benchmark. A large coefficient degrades the displayed
 particle levels; $\gamma_0=0.01$ is the default used in the main
 experiments.
 (\textbf{b}) Effect of temporal stack size on the 3D Cameraman benchmark.
 Increasing the stack depth generally improves reconstruction quality,
 with larger depths requiring greater memory and runtime cost. Error bars denote the
 standard deviation of PSNR over five independently generated noisy
 realizations.}
 \label{fig:ablation}
\end{figure}

\subsection{Computational Efficiency}
\label{sec:runtime}

We compare wall-clock runtimes on representative video volumes to assess
the computational cost of self-supervised video denoising.
Runtime is reported in minutes.
VBM4D uses its CPU implementation, whereas the learning-based methods run
on an NVIDIA RTX 4090 GPU.
The PARM runtime includes the complete self-supervised denoising procedure
with the 3D U-Net backbone, a batch size of 4, and 10 training epochs.

\begin{table*}[!htbp]
\centering
\caption{
Runtime comparison on different video sequences.
Runtime is reported in minutes.
VBM4D is evaluated on CPU, while N2V, SHINE, UDVD, bit2bit, and PARM are evaluated on GPU.
}
\label{tab:runtime}
\setlength{\tabcolsep}{5pt}
\renewcommand{\arraystretch}{1.08}
\begin{tabular}{||c|c|c||c|c|c|c|c|c||}
\hline
\multicolumn{3}{||c||}{Dataset \textbackslash{} Platform} & CPU & \multicolumn{5}{|c||}{GPU} \\
\hline
Name & Resolution & Frames & VBM4D & N2V & SHINE & UDVD & bit2bit & PARM \\
\hline
Cameraman & \(128 \times 128\) & 513 & 3.03 & 1.00 & 88.14 & 9.72 & 22.52 & 3.30 \\
Coastguard & \(144 \times 176\) & 300 & 2.51 & 2.41 & 53.31 & 13.15 & 41.23 & 10.46 \\
Bus / Miss America & \(288 \times 352\) & 150 & 5.72 & 3.82 & 26.44 & 6.67 & 22.45 & 11.02 \\
Flower / Tennis & \(240 \times 352\) & 150 & 5.28 & 2.31 & 26.38 & 6.68 & 23.25 & 10.48 \\
Foreman & \(288 \times 352\) & 300 & 11.02 & 7.6 & 56.82 & 13.40 & 41.50 & 18.99 \\
\hline
\end{tabular}
\end{table*}

Table~\ref{tab:runtime} shows that PARM falls within the runtime range of
the learning-based self-supervised baselines and is consistently faster
than SHINE.
PARM is slower than N2V and the CPU VBM4D implementation on all reported
sequences, and slower than UDVD on some sequences.
Because VBM4D and the learning-based methods use different hardware, these
measurements characterize wall-clock cost rather than hardware-normalized
algorithmic efficiency.
Together with the reconstruction results above, they quantify the computational
cost associated with PARM's stack-based restoration.

\section{Conclusion}
\label{sec:conclusion}

We introduced PARM, a self-supervised framework for low-count image
denoising based on particle-arrival-rate matching.
Rather than treating denoising as a static regression problem at a
fixed exposure, PARM models particle accumulation as an
exposure-indexed Poisson counting process on a discrete state space.
The Kolmogorov forward equation yields the conditional
particle-arrival-rate field, which equals the posterior mean of the clean
particle-arrival rate and therefore defines the statistical target for
denoising.

For exact calibrated Poisson counts, thinning decomposes a single noisy observation into
complementary lower-exposure views, whose exposure-normalized count
differences provide conditionally unbiased rate targets.
Under the GKL divergence, regression to these self-supervised targets
and supervised clean-rate regression share the same Bayes-optimal
predictor.
In implementation, PARM separates the global count scale from the
normalized spatial allocation, while bidirectional thinning and soft
cross-view consistency improve stability at extremely low counts.

On synthetic Poisson data and standard video benchmarks, PARM achieved
competitive or leading performance across particle levels while retaining fine structures.
On real low-dose electron microscopy data, PARM preserved nanoparticle
structures and showed the strongest agreement with group-averaged
multi-frame references across the evaluated settings.
Because clean microscopy ground truth was unavailable, this comparison
demonstrates reconstruction consistency rather than absolute fidelity
to an unknown clean image.

Future work could extend the framework to mixed Poisson--Gaussian
sensor noise, more efficient multiscale architectures, and real-time
low-count video and microscopy pipelines.
The rate-field perspective links particle-counting models with self-supervised restoration, particularly for microscopy without clean references.

\appendix
\section{Additional Theoretical Details}
\label{app:additional_theory}
\label{app:implementation}

\subsection{Randomized Exposure Without Explicit Conditioning}
\label{app:randomized_exposure}

In the reported implementation, the splitting ratio is randomly sampled
and is not supplied to the network.
Let \(\tau_Z\) denote the random effective exposure associated with a
thinned input, let \(Z\) denote that input, and let \(U\) denote its
ideal complementary rate target.
Conditioned on \(\tau_Z\), Theorem~\ref{thm:finite_difference_equivalence}
gives
\begin{equation}
\mathbb{E}[U\mid Z,\tau_Z]
=
\mathbb{E}[\lambda_0X\mid Z,\tau_Z].
\label{eq:app_conditional_random_exposure}
\end{equation}
Marginalizing over the unobserved exposure state,
\begin{align}
\mathbb{E}[U\mid Z]
&=
\mathbb{E}
\bigl[
\mathbb{E}[U\mid Z,\tau_Z]
\mid Z
\bigr]
\nonumber\\
&=
\mathbb{E}[\lambda_0X\mid Z].
\label{eq:app_marginalized_exposure_equivalence}
\end{align}
Omitting \(\tau_Z\) therefore does not invalidate the population
conditional-mean identity.
Instead, the learned predictor is the Bayes-optimal rate estimator
marginalized over the randomized thinning distribution.
Define
\begin{equation}
v_{\mathrm{mix}}(z)
\equiv
\mathbb{E}[\lambda_0 X \mid Z=z],
\label{eq:app_exposure_marginalized_rate}
\end{equation}
where the expectation is taken under the direction-weighted
randomized-thinning distribution of
Sec.~\ref{sec:self_supervised_thinning}.
This is the Bayes-optimal rate target when the network observes \(Z\)
but not \(\tau_Z\).
The same argument applies to any splitting-ratio distribution
independent of \(X\), including
\(p\sim\mathcal{U}(\varepsilon_p,1/2)\) used in the reported experiments.

The implemented network therefore receives only the noisy count image
or spatiotemporal count stack and outputs an unconstrained score field
\(h_\theta(z)\).
A softmax over \(\Omega\) converts this score field into a normalized
particle-allocation representation, so the network models the relative
spatial allocation of the conditional particle-arrival rate rather than
its global magnitude.

\subsection{Connection to Softmax Allocation}
\label{app:rate_allocation}

Consider a nonnegative rate target \(R\) over the prediction domain
\(\Omega\).
For \(M_R>0\), it can be decomposed as \(R=M_R\pi\) with
\(M_R=\sum_i R_i\), \(\pi_i\geq0\), and \(\sum_i\pi_i=1\).
A positive rate prediction can likewise be written as
\(v=\widetilde m\tilde\pi\) with \(\widetilde m>0\), \(\tilde\pi_i\geq0\),
and \(\sum_i\tilde\pi_i=1\).
Here \(M_R\) is the total mass of the target field \(R\) and is distinct
from both the total conditional rate \(m_\tau\) in
Eq.~\eqref{eq:rate_allocation_decomposition} and the observed particle
mass \(\widehat{M}_\tau\) used at inference.
Substituting these factorizations into the GKL divergence gives
\begin{equation}
D_{\mathrm{GKL}}(R,v)
=
\frac{1}{|\Omega|}
\bigl[
D_{\mathrm{GKL}}^{(1)}(M_R,\widetilde m)
+
M_R D_{\mathrm{KL}}(\pi\|\tilde\pi)
\bigr],
\label{eq:app_gkl_mass_kl}
\end{equation}
where
\(D_{\mathrm{GKL}}^{(1)}(M_R,\widetilde m)=M_R\log(M_R/\widetilde m)-M_R+\widetilde m\)
and
\(D_{\mathrm{KL}}(\pi\|\tilde\pi)=\sum_i\pi_i\log(\pi_i/\tilde\pi_i)\).
When \(M_R=0\), the allocation-dependent term vanishes.

PARM parameterizes the normalized allocation using the unconstrained
network score field \(h_\theta(z)\) followed by a softmax over \(\Omega\),
\(\pi_{\theta,i}(z)=\exp(h_{\theta,i}(z))/\sum_j\exp(h_{\theta,j}(z))\).
For a nonnegative target \(R\), the component of the GKL objective that
depends on the spatial allocation is, up to terms independent of
\(\theta\),
\(\mathcal{L}_{\mathrm{alloc}}=-|\Omega|^{-1}\sum_i R_i\log \pi_{\theta,i}\).
Thus, the target magnitude is retained as a sample-dependent weight,
whereas the network prediction represents only the relative spatial
allocation.

To characterize the population optimum, let \(Z\) denote the network
input and define \(\mu_i(z)=\mathbb{E}[U_i\mid Z=z]\).
The conditional allocation-dependent risk is
\(-\sum_i\mu_i(z)\log\tilde\pi_i\) subject to \(\sum_i\tilde\pi_i=1\).
Whenever \(\sum_j\mu_j(z)>0\), its unique minimizer is
\begin{equation}
\pi_i^{*}(z)
=
\frac{\mathbb{E}[U_i\mid Z=z]}{\sum_{j\in\Omega}\mathbb{E}[U_j\mid Z=z]}.
\label{eq:app_optimal_allocation}
\end{equation}
If \(\sum_j\mu_j(z)=0\), the conditional allocation risk is zero for
every \(\tilde\pi\) on the simplex, so the allocation is not identifiable.

For the ideal finite-exposure target with a fixed thinning ratio
\(p\), Theorem~\ref{thm:finite_difference_equivalence} therefore gives
\begin{equation}
\pi_i^{*}(y)
=
\frac{\mathbb{E}[\lambda_0X_i\mid Y_{p\tau}=y]}
{\sum_{j\in\Omega}\mathbb{E}[\lambda_0X_j\mid Y_{p\tau}=y]}.
\label{eq:app_normalized_posterior_rate}
\end{equation}
Hence, the softmax output identifies the normalized allocation of the
posterior particle-arrival rate, while the global particle mass is
restored separately during inference.

When the thinning ratio is not supplied to the network, the
corresponding optimal softmax output is the normalized spatial
allocation of the exposure-marginalized rate \(v_{\mathrm{mix}}(z)\) in
Eq.~\eqref{eq:app_exposure_marginalized_rate}.

In implementation, the ideal targets \(R^{A}=B/(q\tau)\) and
\(R^{B}=A/(p\tau)\) are replaced by the empirical unit-mean targets
\(\widetilde R^{A}=B/\overline B\) and \(\widetilde R^{B}=A/\overline A\)
defined in Sec.~\ref{sec:practical_allocation}.
Because each empirical mean is a positive scalar shared by all
positions in the corresponding minibatch, this normalization preserves
the within-batch spatial allocation of each complementary count field.
The exact population-equivalence theorem applies to the ideal
exposure-normalized targets; empirical normalization is a practical
implementation-level approximation.

\subsection{Scope and Limitations of the Theoretical Guarantees}
\label{app:theoretical_scope}
\label{app:score_consistency}

The Kolmogorov forward derivation and the finite-exposure argument
identify the same conditional particle-arrival-rate target
\(v_\tau(y)=\mathbb{E}[\lambda_0X\mid Y_\tau=y]\).
The equivalence result is a population statement about the Bayes-optimal
predictor.
It does not assert equality of the numerical risk values, finite-sample
generalization, parameter consistency, or optimization convergence.

The exact result assumes calibrated integer-valued Poisson counts with
known exposure intervals and independent increments, finite first
moments, and unrestricted measurable predictors of finite risk
optimized under GKL divergence.
Randomizing the thinning ratio without providing it to the network
preserves the conditional-mean identity when the ratio is sampled
independently of the latent clean image, but the resulting predictor is
marginalized over the randomized exposure distribution.

The consistency regularizer is not implied by the finite-exposure
equivalence theorem.
The two Bayes predictors \(v_{p\tau}(Y_{p\tau})\) and
\(v_{q\tau}(Y_{q\tau})\) are generally not identical pointwise because
they condition on different observations at different effective
exposures.
Moreover, because the softmax is invariant to a global logit offset, the
raw-score penalty constrains an otherwise nonidentifiable component of
the representation.
The regularizer should therefore be interpreted as a weak empirical
stabilization prior that may bias the population optimum.

Empirical target normalization, randomized rounding of non-integer
measurements, softmax allocation, and observed-mass-preserving
inference are practical implementation components rather than
consequences of the equivalence theorem or forms of posterior
calibration.
Applying the trained network directly to the complete observation
\(Y_\tau\) is likewise a practical inference rule rather than a
consequence of the theorem, whose population analysis concerns inputs
drawn from the randomized thinning distribution.
When read noise, detector gain, imperfect exposure calibration, or
preprocessing violates the calibrated Poisson model, the objective
should be interpreted as a Poisson-informed self-supervised surrogate
rather than an exact unbiased-risk construction.

\ifCLASSOPTIONcaptionsoff
 \newpage
\fi

\footnotesize
\bibliographystyle{IEEEtran}
\bibliography{refs}

\begin{thebibliography}{10}
\providecommand{\url}[1]{#1}
\csname url@samestyle\endcsname
\providecommand{\newblock}{\relax}
\providecommand{\bibinfo}[2]{#2}
\providecommand{\BIBentrySTDinterwordspacing}{\spaceskip=0pt\relax}
\providecommand{\BIBentryALTinterwordstretchfactor}{4}
\providecommand{\BIBentryALTinterwordspacing}{\spaceskip=\fontdimen2\font plus
\BIBentryALTinterwordstretchfactor\fontdimen3\font minus \fontdimen4\font\relax}
\providecommand{\BIBforeignlanguage}[2]{{%
\expandafter\ifx\csname l@#1\endcsname\relax
\typeout{** WARNING: IEEEtran.bst: No hyphenation pattern has been}%
\typeout{** loaded for the language `#1'. Using the pattern for}%
\typeout{** the default language instead.}%
\else
\language=\csname l@#1\endcsname
\fi
#2}}
\providecommand{\BIBdecl}{\relax}
\BIBdecl

\bibitem{shangguan2023day}
M.~Shangguan, Z.~Weng, Z.~Lin, Z.~Lee, M.~Shangguan, Z.~Yang, J.~Sun, T.~Wu, Y.~Zhang, and C.~Wen, ``Day and night continuous high-resolution shallow-water depth detection with single-photon underwater lidar,'' \emph{Opt. Express}, vol.~31, no.~26, pp. 43\,950--43\,962, 2023.

\bibitem{rapp2020advances}
J.~Rapp, J.~Tachella, Y.~Altmann, S.~McLaughlin, and V.~K. Goyal, ``Advances in single-photon lidar for autonomous vehicles: Working principles, challenges, and recent advances,'' \emph{IEEE Signal Process. Mag.}, vol.~37, no.~4, pp. 62--71, 2020.

\bibitem{morris2015imaging}
P.~A. Morris, R.~S. Aspden, J.~E. Bell, R.~W. Boyd, and M.~J. Padgett, ``Imaging with a small number of photons,'' \emph{Nat. Commun.}, vol.~6, no.~1, p. 5913, 2015.

\bibitem{em}
P.~A. Crozier, M.~Leibovich, P.~Haluai, M.~Tan, A.~M. Thomas, J.~Vincent, S.~Mohan, A.~Marcos~Morales, S.~A. Kulkarni, D.~S. Matteson \emph{et~al.}, ``Visualizing nanoparticle surface dynamics and instabilities enabled by deep denoising,'' \emph{Science}, vol. 387, no. 6737, pp. 949--954, 2025.

\bibitem{shine}
J.~Kim, J.~Rhee, S.~Kang, M.~Jung, J.~Kim, M.~Jeon, J.~Park, J.~Ham, B.~H. Kim, W.~C. Lee \emph{et~al.}, ``Self-supervised machine learning framework for high-throughput electron microscopy,'' \emph{Sci. Adv.}, vol.~11, no.~14, p. eads5552, 2025.

\bibitem{le2007variational}
T.~Le, R.~Chartrand, and T.~J. Asaki, ``A variational approach to reconstructing images corrupted by poisson noise,'' \emph{J. Math. Imag. Vis.}, vol.~27, no.~3, pp. 257--263, 2007.

\bibitem{allard2008total}
W.~K. Allard, ``Total variation regularization for image denoising, i. geometric theory,'' \emph{SIAM J. Math. Anal.}, vol.~39, no.~4, pp. 1150--1190, 2008.

\bibitem{anscombe1948transformation}
F.~J. Anscombe, ``The transformation of poisson, binomial and negative-binomial data,'' \emph{Biometrika}, vol.~35, no. 3/4, pp. 246--254, 1948.

\bibitem{purelet}
F.~Luisier, T.~Blu, and M.~Unser, ``Image denoising in mixed poisson--gaussian noise,'' \emph{IEEE Trans. Image Process.}, vol.~20, no.~3, pp. 696--708, 2011.

\bibitem{chen2018learning}
C.~Chen, Q.~Chen, J.~Xu, and V.~Koltun, ``Learning to see in the dark,'' in \emph{Proc. IEEE/CVF Conf. Comput. Vis. Pattern Recognit. (CVPR)}, 2018, pp. 3291--3300.

\bibitem{li2025ts}
Y.~Li, Z.~Zhang, J.~Xia, J.~Cheng, Q.~Wu, J.~Li, Y.~Tian, and H.~Kong, ``Ts-diff: Two-stage diffusion model for low-light raw image enhancement,'' in \emph{Proc. Int. Joint Conf. Neural Netw. (IJCNN)}, 2025, pp. 1--8.

\bibitem{lehtinen2018noise2noise}
J.~Lehtinen, J.~Munkberg, J.~Hasselgren, S.~Laine, T.~Karras, M.~Aittala, and T.~Aila, ``{N}oise2{N}oise: Learning image restoration without clean data,'' in \emph{Proc. Int. Conf. Mach. Learn. (ICML)}, ser. Proc. Mach. Learn. Res., J.~Dy and A.~Krause, Eds., vol.~80.\hskip 1em plus 0.5em minus 0.4em\relax PMLR, 10--15 Jul 2018, pp. 2965--2974.

\bibitem{krull2019noise2void}
A.~Krull, T.-O. Buchholz, and F.~Jug, ``Noise2void-learning denoising from single noisy images,'' in \emph{Proc. IEEE/CVF Conf. Comput. Vis. Pattern Recognit. (CVPR)}, 2019, pp. 2129--2137.

\bibitem{batson2019noise2self}
J.~Batson and L.~Royer, ``Noise2self: Blind denoising by self-supervision,'' in \emph{Proc. Int. Conf. Mach. Learn. (ICML)}.\hskip 1em plus 0.5em minus 0.4em\relax PMLR, 2019, pp. 524--533.

\bibitem{stbn}
Z.~Chen, T.~Jiang, X.~Hu, W.~Zhang, H.~Li, and H.~Wang, ``Spatiotemporal blind-spot network with calibrated flow alignment for self-supervised video denoising,'' in \emph{Proc. AAAI Conf. Artif. Intell. (AAAI)}, vol.~39, no.~3, 2025, pp. 2411--2419.

\bibitem{lipman2023flow}
Y.~Lipman, R.~T.~Q. Chen, H.~Ben-Hamu, M.~Nickel, and M.~Le, ``Flow matching for generative modeling,'' in \emph{Proc. Int. Conf. Learn. Represent. (ICLR)}, 2023.

\bibitem{holderrieth2025generator}
P.~Holderrieth, M.~Havasi, J.~Yim, N.~Shaul, I.~Gat, T.~Jaakkola, B.~Karrer, R.~T.~Q. Chen, and Y.~Lipman, ``Generator matching: Generative modeling with arbitrary markov processes,'' in \emph{Proc. Int. Conf. Learn. Represent. (ICLR)}, 2025.

\bibitem{bartlett1978introduction}
M.~S. Bartlett, \emph{An introduction to stochastic processes: with special reference to methods and applications}.\hskip 1em plus 0.5em minus 0.4em\relax CUP Archive, 1978.

\bibitem{rudin1992nonlinear}
L.~I. Rudin, S.~Osher, and E.~Fatemi, ``Nonlinear total variation based noise removal algorithms,'' \emph{Phys. D: Nonlinear Phenom.}, vol.~60, no. 1-4, pp. 259--268, 1992.

\bibitem{luisier2010fast}
F.~Luisier, C.~Vonesch, T.~Blu, and M.~Unser, ``Fast interscale wavelet denoising of poisson-corrupted images,'' \emph{Signal Process.}, vol.~90, no.~2, pp. 415--427, 2010.

\bibitem{blu2007sure}
T.~Blu and F.~Luisier, ``The {SURE-LET} approach to image denoising,'' \emph{IEEE Trans. Image Process.}, vol.~16, no.~11, pp. 2778--2786, 2007.

\bibitem{dabov2007image}
K.~Dabov, A.~Foi, V.~Katkovnik, and K.~Egiazarian, ``Image denoising by sparse 3-d transform-domain collaborative filtering,'' \emph{IEEE Trans. Image Process.}, vol.~16, no.~8, pp. 2080--2095, 2007.

\bibitem{makinen2020collaborative}
Y.~M{\"a}kinen, L.~Azzari, and A.~Foi, ``Collaborative filtering of correlated noise: Exact transform-domain variance for improved shrinkage and patch matching,'' \emph{IEEE Trans. Image Process.}, vol.~29, pp. 8339--8354, 2020.

\bibitem{maggioni2012nonlocal}
M.~Maggioni, V.~Katkovnik, K.~Egiazarian, and A.~Foi, ``Nonlocal transform-domain filter for volumetric data denoising and reconstruction,'' \emph{IEEE Trans. Image Process.}, vol.~22, no.~1, pp. 119--133, 2012.

\bibitem{makitalo2010optimal}
M.~Makitalo and A.~Foi, ``Optimal inversion of the anscombe transformation in low-count poisson image denoising,'' \emph{IEEE Trans. Image Process.}, vol.~20, no.~1, pp. 99--109, 2010.

\bibitem{makitalo2013exact}
M.~Mäkitalo and A.~Foi, ``Poisson-gaussian denoising using the exact unbiased inverse of the generalized anscombe transformation,'' in \emph{Proc. IEEE Int. Conf. Acoust., Speech Signal Process. (ICASSP)}, 2012, pp. 1081--1084.

\bibitem{shen2018improved}
Y.~Shen, Y.~Chen, Q.~Liu, S.~Lou, W.~Yu, X.~Wang, and H.~Chen, ``Improved anscombe transformation and total variation for denoising of lowlight infrared images,'' \emph{Infrared Phys. Technol.}, vol.~93, pp. 192--198, 2018.

\bibitem{zhang2021poisson}
J.~Zhang, L.~Zhao, and Z.~Wei, ``Poisson-skellam distribution based regularization conditional random field method for photon-limited poisson image denoising,'' \emph{Signal Process.}, vol. 188, p. 108165, 2021.

\bibitem{foi2008practical}
A.~Foi, M.~Trimeche, V.~Katkovnik, and K.~Egiazarian, ``Practical poissonian-gaussian noise modeling and fitting for single-image raw-data,'' \emph{IEEE Trans. Image Process.}, vol.~17, no.~10, pp. 1737--1754, 2008.

\bibitem{zhang2019poisson}
Y.~Zhang, Y.~Zhu, E.~Nichols, Q.~Wang, S.~Zhang, C.~Smith, and S.~Howard, ``A poisson-gaussian denoising dataset with real fluorescence microscopy images,'' in \emph{Proc. IEEE/CVF Conf. Comput. Vis. Pattern Recognit. (CVPR)}, 2019, pp. 11\,710--11\,718.

\bibitem{ramani2008monte}
S.~Ramani, T.~Blu, and M.~Unser, ``Monte-carlo {SURE}: A black-box optimization of regularization parameters for general denoising algorithms,'' \emph{IEEE Trans. Image Process.}, vol.~17, no.~9, pp. 1540--1554, 2008.

\bibitem{lemontagner2014unbiased}
Y.~Le~Montagner, E.~D. Angelini, and J.-C. Olivo-Marin, ``An unbiased risk estimator for image denoising in the presence of mixed poisson--gaussian noise,'' \emph{IEEE Trans. Image Process.}, vol.~23, no.~3, pp. 1255--1268, 2014.

\bibitem{krull2020probabilistic}
A.~Krull, T.~Vi{\v{c}}ar, M.~Prakash, M.~Lalit, and F.~Jug, ``Probabilistic noise2void: Unsupervised content-aware denoising,'' \emph{Front. Comput. Sci.}, vol.~2, p.~5, 2020.

\bibitem{xie2020noise2same}
Y.~Xie, Z.~Wang, and S.~Ji, ``Noise2same: Optimizing a self-supervised bound for image denoising,'' \emph{Adv. Neural Inf. Process. Syst. (NeurIPS)}, vol.~33, pp. 20\,320--20\,330, 2020.

\bibitem{quan2020self2self}
Y.~Quan, M.~Chen, T.~Pang, and H.~Ji, ``{Self2Self} with dropout: Learning self-supervised denoising from single image,'' in \emph{Proc. IEEE/CVF Conf. Comput. Vis. Pattern Recognit. (CVPR)}, 2020, pp. 1887--1895.

\bibitem{wang2022blind2unblind}
Z.~Wang, J.~Liu, G.~Li, and H.~Han, ``Blind2unblind: Self-supervised image denoising with visible blind spots,'' in \emph{Proc. IEEE/CVF Conf. Comput. Vis. Pattern Recognit. (CVPR)}, 2022, pp. 2027--2036.

\bibitem{huang2021neighbor2neighbor}
T.~Huang, S.~Li, X.~Jia, H.~Lu, and J.~Liu, ``Neighbor2neighbor: Self-supervised denoising from single noisy images,'' in \emph{Proc. IEEE/CVF Conf. Comput. Vis. Pattern Recognit. (CVPR)}, 2021, pp. 14\,781--14\,790.

\bibitem{lequyer2022fast}
J.~Lequyer, R.~Philip, A.~Sharma, W.-H. Hsu, and L.~Pelletier, ``A fast blind zero-shot denoiser,'' \emph{Nat. Mach. Intell.}, vol.~4, no.~11, pp. 953--963, 2022.

\bibitem{mansour2023zero}
Y.~Mansour and R.~Heckel, ``Zero-shot {Noise2Noise}: Efficient image denoising without any data,'' in \emph{Proc. IEEE/CVF Conf. Comput. Vis. Pattern Recognit. (CVPR)}, 2023, pp. 14\,018--14\,027.

\bibitem{ma2025pixel2pixel}
Q.~Ma, J.~Jiang, X.~Zhou, P.~Liang, X.~Liu, and J.~Ma, ``Pixel2pixel: A pixelwise approach for zero-shot single image denoising,'' \emph{IEEE Trans. Pattern Anal. Mach. Intell.}, vol.~47, no.~6, pp. 4614--4629, 2025.

\bibitem{udvd}
D.~Y. Sheth, S.~Mohan, J.~L. Vincent, R.~Manzorro, P.~A. Crozier, M.~M. Khapra, E.~P. Simoncelli, and C.~Fernandez-Granda, ``Unsupervised deep video denoising,'' in \emph{Proc. IEEE/CVF Int. Conf. Comput. Vis. (ICCV)}, 2021, pp. 1759--1768.

\bibitem{aiyetigbo2024umvd}
M.~Aiyetigbo, A.~Korte, E.~Anderson, R.~Chalhoub, P.~Kalivas, F.~Luo, and N.~Li, ``Unsupervised microscopy video denoising,'' in \emph{Proc. IEEE/CVF Conf. Comput. Vis. Pattern Recognit. Workshops (CVPRW)}, 2024.

\bibitem{chen2026ssd}
Y.~Chen, S.~Wang, Y.~Wang, J.~Mao, L.~Liu, X.~Cao, Z.~Chen \emph{et~al.}, ``Self-optimized spectral distance for low-light high-throughput raman hyperspectral imaging,'' \emph{Nat. Comput. Sci.}, vol.~6, pp. 603--616, 2026.

\bibitem{bit2bit}
Y.~Liu, A.~Krull, H.~Basevi, A.~Leonardis, and M.~W. Jenkins, ``bit2bit: 1-bit quanta video reconstruction via self-supervised photon prediction,'' in \emph{Adv. Neural Inf. Process. Syst. (NeurIPS)}, 2024.

\bibitem{khademi2021self}
W.~Khademi, S.~Rao, C.~Minnerath, G.~Hagen, and J.~Ventura, ``Self-supervised poisson-gaussian denoising,'' in \emph{Proc. IEEE/CVF Winter Conf. Appl. Comput. Vis. (WACV)}, 2021, pp. 2131--2139.

\bibitem{ta2022poisson2sparse}
C.-K. Ta, A.~Aich, A.~Gupta, and A.~K. Roy-Chowdhury, ``Poisson2sparse: Self-supervised poisson denoising from a single image,'' in \emph{Proc. Int. Conf. Med. Image Comput. Comput.-Assist. Intervent. (MICCAI)}.\hskip 1em plus 0.5em minus 0.4em\relax Springer, 2022, pp. 557--567.

\bibitem{monroy2025generalized}
B.~Monroy, J.~Bacca, and J.~Tachella, ``Generalized recorrupted-to-recorrupted: Self-supervised learning beyond gaussian noise,'' in \emph{Proc. IEEE/CVF Conf. Comput. Vis. Pattern Recognit. (CVPR)}, 2025, pp. 28\,155--28\,164.

\bibitem{albergo2025stochastic}
M.~Albergo, N.~M. Boffi, and E.~Vanden-Eijnden, ``Stochastic interpolants: A unifying framework for flows and diffusions,'' \emph{J. Mach. Learn. Res.}, vol.~26, no. 209, pp. 1--80, 2025.

\bibitem{pourya2026flower}
M.~Pourya, B.~El~Rawas, and M.~Unser, ``{Flower}: A flow-matching solver for inverse problems,'' in \emph{Proc. Int. Conf. Learn. Represent. (ICLR)}, 2026.

\bibitem{gat2024discrete}
I.~Gat, T.~Remez, N.~Shaul, F.~Kreuk, R.~T. Chen, G.~Synnaeve, Y.~Adi, and Y.~Lipman, ``Discrete flow matching,'' \emph{Adv. Neural Inf. Process. Syst. (NeurIPS)}, vol.~37, pp. 133\,345--133\,385, 2024.

\bibitem{chen2023learning}
T.~Chen and M.~Zhou, ``Learning to jump: Thinning and thickening latent counts for generative modeling,'' in \emph{Proc. Int. Conf. Mach. Learn. (ICML)}, 2023.

\bibitem{bhattacharya2025itdpdm}
S.~Bhattacharya, A.~Gorle, A.~Bilal, C.~Ding, A.~K.~S. Yadav, and T.~Weissman, ``{ItDPDM}: Information-theoretic discrete poisson diffusion model,'' in \emph{Adv. Neural Inf. Process. Syst. (NeurIPS)}, 2025.

\bibitem{maggioni2012vbm4d}
M.~Maggioni, G.~Boracchi, A.~Foi, and K.~Egiazarian, ``Video denoising, deblocking, and enhancement through separable 4-d nonlocal spatiotemporal transforms,'' \emph{IEEE Trans. Image Process.}, vol.~21, no.~9, pp. 3952--3966, 2012.

\end{thebibliography}

\end{document}